\documentclass[11pt]{article}
\usepackage[margin=1in]{geometry}
\usepackage{cite}
\usepackage{amsmath,amssymb,amsfonts,amsthm}
\usepackage{algorithmic,url,tikz}
\usepackage{graphicx,booktabs,subcaption}
\usepackage{textcomp}
\usepackage[dvipsnames]{xcolor}
\def\BibTeX{{\rm B\kern-.05em{\sc i\kern-.025em b}\kern-.08em
    T\kern-.1667em\lower.7ex\hbox{E}\kern-.125emX}}

\allowdisplaybreaks

\newtheorem{thm}{Theorem}
\newtheorem{lem}[thm]{Lemma}

\theoremstyle{definition}
\newtheorem{defn}[thm]{Definition}

\theoremstyle{remark}

\newcommand{\cF}{{\mathcal{F}}}

\newcommand{\cS}{{\mathcal{S}}}

\renewcommand{\phi}{\varphi}

\renewcommand{\le}{<}
\renewcommand{\ge}{>}

\newcommand{\lrp}[1]{\left({#1}\right)}

\newcommand{\lrs}[1]{\left\{{#1}\right\}}
\newcommand{\lrm}[1]{\left|{#1}\right|}

\newcommand{\OPT}{\operatorname{OPT}}
\newcommand{\cost}{\operatorname{cost}}

\title{Pre-emptive Carbon Aware Scheduling for Edge Workloads\footnote{Supported by NSF 2216970 (IDEAL) as well as a gift from Adobe Labs.}}
\date{}

\author{Cecilia Chen\thanks{\texttt{zhuo.chen@northwestern.edu}} \and Rajni Dabas \thanks{\texttt{rajni.dabas1@northwestern.edu}} \and 
\and Samir Khuller\thanks{\texttt{samir.khuller@northwestern.edu}} \and Emilie Rivkin\thanks{\texttt{emilie.rivkin@northwestern.edu}}}

\begin{document}
\maketitle
\begin{abstract}
 Power grids use a constantly changing mix of fossil fuels and clean renewables like solar and wind. Because of this, the {\em carbon footprint} fluctuates drastically based on time and geographic location. Time-varying carbon intensity creates opportunities to reduce carbon emissions by shifting workloads to cleaner periods, but deadlines, task dependencies, and limited edge capacity constrain this flexibility. We study carbon-aware scheduling of jobs with release times, deadlines, and arbitrary precedence constraints under piecewise-constant carbon costs. Jobs execute preemptively on a single edge server or are outsourced at job-dependent costs to an unlimited-capacity cloud, modeled as providing instantaneous execution. The objective is to minimize the sum of edge carbon costs and cloud outsourcing costs while satisfying all timing and precedence constraints. For edge-only execution, we give an exact polynomial-time algorithm. Allowing cloud outsourcing makes the problem NP-hard, but we obtain a fully polynomial-time approximation scheme through a reduction to preemptive scheduling with rejection penalties (weighted tardiness). In contrast, even two edge servers without job migration admit no polynomially bounded approximation when carbon-cost variation is unrestricted, unless \(P=NP\). We complement these theoretical results with experiments on controlled synthetic workloads and production-derived Alibaba traces. On synthetic instances, our incremental max-flow implementation is $4.96\times$ faster on average than a direct min-cost flow, while experiements on production-derived workloads evaluate its performance under realistic job structures.
 
 % or the second sentence can be: Our evaluation compares the proposed algorithms with direct min-cost flow and existing scheduling methods, and demonstrates substantial computational improvements while scaling to large scheduling instances.
\end{abstract}

\section{Introduction}
The explosion of data-heavy applications, like artificial intelligence and machine learning, has caused a massive spike in the world's need for computing power. Because of this, computing systems consume a massive amount of electricity, which has rapidly driven up our global carbon footprint. Historically, computational resource management has optimized purely for performance metrics or baseline energy efficiency.  A lot of work has been done on online speed scaling to decide at what speed to run a processor as the speed controls the energy used. However, the issues are a bit more complex than simply controlling the speed of the processor \cite{YDS}.

 Today, our power grids use a constantly changing mix of fossil fuels and clean renewables like solar and wind. Because of this, the ``carbon intensity'' (the amount of carbon emitted per kilowatt-hour of electricity generated) fluctuates drastically based on time and geographic location. To make computing truly green, we must transition to carbon-aware scheduling - intelligently routing and deferring computing workloads to align with periods and locations of clean, renewable energy availability. In addition, edge servers are cheaper to run, so decisions have to be made about what can be computed at the edge versus what should be moved to more expensive centralized servers while still  meeting service level agreements and customer needs.

We consider this challenge using a novel setup that involves both edge servers and cloud servers. Edge servers are located physically close to the user, meaning they process data very quickly but have limited capacity. Cloud servers are massive, distant data centers with huge capacity, but sending data to them takes extra time and energy. Within this framework, two fundamental classes of high-level carbon-aware scheduling problems emerge:

\begin{enumerate}
    \item Where to run: When a new job arrives, we need to choose if it should be processed locally on the edge server or moved to the cloud. %Due to access to multiple data centers, jobs may also be moved between data centers. This requires comparing the local grid's carbon cost against multiple cloud's carbon costs, while also factoring in the extra energy/cost/delays needed to send the data over the network.
\item When to run: The core challenge is to deliberately defer or shift flexible workloads to time windows when renewable energy is abundant. This requires navigating time-varying carbon costs to minimize overall emissions without severely degrading performance objectives.
\end{enumerate}

{\bf Why is the problem worth solving?} Cloud operators already act on carbon signals. Google's production system shifts flexible workloads toward cleaner hours using day-ahead forecasts of carbon intensity and demand \cite{radovanovic2023carbonaware}. Trace-driven studies have measured how much there is to gain. Wiesner and coauthors found reductions above ten percent for delay-tolerant jobs in several grids when jobs are given windows rather than fixed start times \cite{wiesner2021letswaitawhile}. Sukprasert and coauthors, analysing three years of hourly data across 123 regions, found that practical savings fall short of idealised upper bounds, that simple policies often capture most of the available reductions, and that the benefits diminish as grids become cleaner \cite{sukprasert2024limitations}. These findings motivate understanding when workload shifting is useful and how close practical schedules come to the optimum.

{\bf Informal Problem Formulation:} We first formally discuss the problems addressed in this paper in Subsection~\ref{sec-defn} and then compare that to prior work in this emerging space in Subsection~\ref{sec-related}. 
Our framework is inspired by 
Cendrier et al \cite{CendrierWBRVC25,cendrier:hal}, and they already showed experimentally why carbon aware scheduling is important and can have significant benefits. They also develop a {\em provably optimal strategy} for a simple case - jobs have release times and deadlines, can be paused and resumed, and form a single chain of precedence constraints. In their followup work\cite{cendrier:hal}, they also consider pre-emptive scheduling with time varying carbon costs {\em but with all jobs having identical release times.}

{\bf Problem Motivation:}
Our interest in the problem arises from basically asking the question - can this framework be generalized in a few ways? Can we allow any precedence constraints and release times? The other is to allow for situations when the set of jobs is simply infeasible on the edge server and there is no option but to move some jobs to the cloud. How do we select which jobs to move? Moving a job to the cloud incurs a job specific fixed cost that has to be factored in, what increases the complexity is that this cost is not fixed but in fact depends on the exact schedule we create. Moving a job that might have been scheduled at a high carbon cost region on the edge server is different from the cost when it is scheduled at a low cost region.

\subsection{Problem Definition and Notations}
\label{sec-defn}

We are given a set $J=\{1,\ldots,n\}$ of jobs, a directed acyclic graph $G=(J,E)$ specifying their precedence constraints, and a time horizon \([0,H)\). Each job $j$ has a processing requirement $p_j>0$, a release time $r_j\in[0,H)$, and a deadline $d_j\in[0,H)$. We will use $i\prec j$ if there is a directed path from $i$ to $j$, and $i\preceq j$ if $i=j$ or $i\prec j$.

There is one unit-speed edge server (local machine), which processes at most one job at a time. Preemption is allowed for free.\footnote{The closely related recent work \cite{CendrierWBRVC25} assumes a simpler model that allows a job to be paused and resumed freely, but this is a weaker assumption than a pre-emptive model where a job can be paused and a different job can be resumed. We believe the simpler model only works because in their case they handle jobs that are in a chain, so in any case of a job is paused, its only that job can be resumed and nothing else is available to be run until this job is completed.} A job must receive all $p_j$ units of processing between its release time and deadline. Let $S_j$ and $C_j$ denote the times at which \(j\) is started and finished, respectively. Precedence constraints require that $C_i\leq S_j$ for all $(i,j)\in E$.

Processing on the edge server at time $t$ incurs a nonnegative cost $c(t)$ per unit of processing time which we refer to as {\em carbon cost}. The interval \([0,H)\) is divided into intervals \([\tau_0=0,\tau_1),[\tau_1,\tau_2),\ldots,[t
\tau_{T-1},\tau_T=H)\), and \(c(t)\) is constant on the interval \([\tau_t,\tau_{t+1})\). Each job induces the same carbon cost, and idle time incurs no cost. For a schedule $\mathcal S$, let $B_j(\mathcal S)$ be the set of times at which $j$ is processed locally on the edge server, and let $B(\mathcal S):=\bigcup_{j\in J}B_j(\cS)$.

We consider two variants of this model.

\begin{enumerate}
\item  In the \emph{edge-only} problem, every job must be processed on the local edge server (if a feasibile schedule exists), and the objective is
\[
\cost_{\mathrm{edg}}(\mathcal S) = \sum_{t=0}^{T-1}\lrm{B(\cS)\cap[\tau_t,\tau_{t+1})}\cdot c(t).
\]

\item In the \emph{cloud-outsourcing problem}, each job $j$ additionally has a nonnegative outsourcing cost $\kappa_j$. We choose a set $A\subseteq J$ of jobs to process entirely on the edge server; every job outside $A$ is executed in the cloud. A cloud job uses no local capacity and executes instantaneously at a chosen time $\theta_j\in[r_j,d_j]$, at cost $\kappa_j$. For such a job, set $S_j=C_j=\theta_j$, so the same precedence inequalities apply to edge and cloud jobs. The cloud has unlimited capacity, so multiple jobs may execute at the same time. If \(\cS_A\) is the schedule of jobs processed locally on the edge server, the objective is
\[
\cost_{\mathrm{cld}}(A,\mathcal S) = \cost_{\mathrm{edg}}(\cS_A)+\sum_{j\in J\setminus A}\kappa_j.
\]
In Fig. \ref{fig:ex}, we give an example of an instance of this problem and a feasible, low-cost solution.
\end{enumerate}

Note that the outsourcing is all-or-nothing, meaning that we cannot move a job to the cloud in the middle of its execution on the edge server. Every job must satisfy its release time, deadline, and precedence constraints regardless of its execution
location. For any given instance, let $\OPT_{\mathrm{edg}}$ and
$\OPT_{\mathrm{cld}}$, respectively, denote the optimal costs for {\em edge-only} and the {\em cloud-outsourcing} problems.

\begin{figure}
\centering
\begin{tikzpicture}

% \node(A) at (1,6) {A};
% \node(B) at (1,5) {B};
% \node(F) at (1,4.5) {F};
% \node(C) at (2,5.5) {C};
% \node(D) at (2,4.5) {D};
% \node(E) at (3,5) {E};

\node(A) at (-3,2.5) {A};
\node(B) at (-3,1.5) {B};
\node(F) at (-3,1) {F};
\node(C) at (-2,2) {C};
\node(D) at (-2,1) {D};
\node(E) at (-1,1.5) {E};

\draw[black, ->] (A) -- (C);
\draw[black, ->] (B) -- (C);
\draw[black, ->] (F) -- (D);
\draw[black, ->] (C) -- (E);
\draw[black, ->] (D) -- (E);

\draw[gray, thick] (0,3) -- (0,0);
\draw[gray, thick] (-0.5,0) -- (6,0);

\fill[Brown] (0,0.25) rectangle (6,0.5);
\fill[Green] (0,0.25) rectangle (3,0.5);
\fill[Green] (4.5,0.25) rectangle (5,0.5);

\draw[black, thick] (0,1.5) -- (2,1.5);
\draw[black, thick] (0,1.4) -- (0,1.6);
\draw[black, thick] (2,1.4) -- (2,1.6);
\node[right] at (2,1.5) {A};
\fill[purple] (0,1.45) rectangle (1,1.55);

\draw[black, thick] (0.5,2.5) -- (3,2.5);
\draw[black, thick] (0.5,2.4) -- (0.5,2.6);
\draw[black, thick] (3,2.4) -- (3,2.6);
\node[left] at (0.5,2.5) {B};
\fill[purple] (1,2.45) rectangle (2,2.55);

\draw[black, thick] (1.5,2) -- (5,2);
\draw[black, thick] (1.5,1.9) -- (1.5,2.1);
\draw[black, thick] (5,1.9) -- (5,2.1);
\node[right] at (5,2) {C};
\fill[purple] (2,1.95) rectangle (3,2.05);

\draw[black, thick] (3.5,1.5) -- (5.75,1.5);
\draw[black, thick] (3.5,1.4) -- (3.5,1.6);
\draw[black, thick] (5.75,1.4) -- (5.75,1.6);
\node[right] at (5.75,1.5) {D};
\fill[purple] (3.5,1.45) rectangle (4.25,1.55);

\draw[black, thick] (3,1) -- (5,1);
\draw[black, thick] (3,0.9) -- (3,1.1);
\draw[black, thick] (5,0.9) -- (5,1.1);
\node[left] at (3,1) {E};
\fill[purple] (4.5,0.95) rectangle (5,1.05);

\draw[black, thick] (3.25,2.5) -- (5.5,2.5);
\draw[black, thick] (3.25,2.4) -- (3.25,2.6);
\draw[black, thick] (5.5,2.4) -- (5.5,2.6);
\node[right] at (5.5,2.5) {F};
\fill[purple!50] (3.25,2.45) rectangle (4.25,2.55);

\node[draw,double,rounded corners](n) at (3.25,3.25) {execute F on the cloud};
\draw[black, ->] (3.25,2.7) -- (n);
\end{tikzpicture}
\caption{A set of jobs with a precedence DAG is presented. The execution windows and job lengths are shown in the timeline. The \textcolor{Green}{low-cost} and \textcolor{Brown}{high-cost} regions are also shown. In order to give a feasible schedule, job F is moved to the cloud. Job E is delayed in order to use a low-cost energy region.}
\label{fig:ex}
\end{figure}

We now describe how we break down the interval \([0,H)\) in the execution of our algorithms.

\begin{defn}[Effective windows]\label{def:effwin}
For each job define
\[
R_j = \max_{i\preceq j}r_i \text{ and } D_j = \min_{j\preceq k}r_k.
\]

These quantities are computed by a forward and a reverse topological pass through $G$. For every $i\prec j$, they satisfy \(R_i\leq R_j\) and \(D_i\leq D_j\). Every feasible schedule in either variant satisfies $S_j\geq R_j$ and $C_j\leq D_j$: a job cannot start before the release of an ancestor or finish after the deadline of a descendant. Local processing on edge server therefore lies in $[R_j,D_j)$, whereas a cloud job may execute at either endpoint of $[R_j,D_j]$. These windows are computed on the original DAG, independently of which jobs are outsourced. 
\end{defn}

\begin{defn}[Atomic intervals]\label{def:atom}
Let \(\sigma=\lrs{r_1,\ldots,r_n}\cup\lrs{d_1,\ldots,d_n}\cup\lrs{\tau_0,\ldots,\tau_T}\). Order \(\sigma\) chronologically so that \(\sigma_0=0\) and \(\sigma_q=H\). The intervals \(I_k=[\sigma_{k-1},\sigma_k)\) partition \([0,H)\) for \(k=1,\ldots,q\), and the carbon cost over each \(I_k\subseteq[\tau_t,\tau_{t+1})\) is \(c(t)\) for each \(t=0,\ldots,T-1\). Each $R_j$ is an original release time and each $D_j$ is an original deadline, so each atomic interval is either wholly inside a job's effective window or disjoint from its interior.
\end{defn}

\subsection{Related Work}
\label{sec-related}
Recent literature has begun to tackle pieces of this puzzle. For example, Penn and Raviv~\cite{PennRaviv2021} study single-machine scheduling under time-of-use electricity tariffs. For preemptive jobs with release times and deadlines, they give
a polynomial-time exact algorithm by reducing the problem to the Hitchcock transportation problem, a special case of
minimum-cost flow. Their formulation assumes independent jobs and does not include precedence constraints.

Lechowicz et al. \cite{lechowicz25} study how to smartly schedule large data processing workloads to align with periods of low-carbon energy. 
%or Cendrier et al. study how to explicitly balance energy consumption with performance when scheduling jobs.
The most relevant work is that of Cendrier et al \cite{CendrierWBRVC25}. In their paper they define the formal model and deal with the special case of a chain of jobs with release times and deadlines with two carbon levels (red/green). This special structure allows for only one job to be active at any point of time, and jobs may be paused and resumed. Their algorithm computes an optimal scheme for maximizing the amount of green intervals it can utilize. Their develop a clever two pass algorithm to find a feasible schedule of minimum carbon cost (if it exists) optimally utilizing green intervals. In the follow-up work \cite{cendrier:hal}, they give a preemptive algorithm for scheduling with variable carbon costs and all jobs released at time 0. This work also gives a linear programming-based algorithm for scheduling on multiple edge servers with free preemption and migration.

Recent work on green computing has mostly studied heuristic-based approaches and only some restricted models of scheduling. Conversely, recent theoretical works like Erlebach et al. \cite{Erlebach} provide mathematical scheduling frameworks for edge-to-cloud task offloading, but they optimize purely for performance-based metrics (like makespan) and completely ignore carbon costs. In the stochastic setting, recent work has focused on improved heuristics scheduling with time-varying holding costs.

\subsection{Key Results}
% \textcolor{red}{@emmi- add a para here for relevance of pause-and-resume hardness citing the exisiting paper on a chain}. 
As mentioned, the previous work by Cendrier et al. in \cite{CendrierWBRVC25} uses the limitation that a started job must be finished before the next starts. As they prove, while this model does have an exact polynomial-time algorithm when the job order is part of the input, it does not admit a constant-factor approximation without an order. Therefore, in this paper, we focus on the {\em preemptive} model. Note that preemption is available for free (without pre-emption, the problems in this space are generally NP-hard).

%Key results. {\em Write these out formally}

%\begin{enumerate}
%    \item Generalization of the window scheduling model.
%    \item Dealing with precedence constraints.
%    \item Relating moving to cloud with Tardiness.
%    \item Experimental evaluation and comparison.
%\end{enumerate}

We first study the {\em edge-only} problem to establish polynomial time solvability.
\begin{thm}\label{thm:exact}
The edge-only problem admits a polynomial-time algorithm that returns a minimum-cost feasible schedule or correctly reports infeasibility.
\end{thm}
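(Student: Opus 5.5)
The plan is to remove the precedence constraints using the effective windows of Definition~\ref{def:effwin}, and then solve the resulting independent-jobs problem exactly as a transportation (min-cost flow) problem over the atomic intervals of Definition~\ref{def:atom}.

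\textbf{Step 1: reduction to independent jobs.} Consider the relaxed instance in which each job $j$ must receive $p_j$ units of processing inside $[R_j,D_j)$ and all precedence constraints are dropped. (We read the definition as $D_j=\min_{j\preceq k} d_k$.) Every feasible schedule of the original instance is feasible for the relaxation, so the relaxed optimum is a lower bound. The key claim is the converse: any feasible relaxed schedule can be transformed, at no change in cost, into one that respects precedence. To show this, fix the set $B$ of busy times and the fixed amount of processing each atomic interval supplies. Within this fixed busy set, reassign the work by preemptive EDF on the effective deadlines $D_j$, breaking ties by a topological order. The cost depends only on $B$, so it is unchanged.

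Feasibility of the reassignment follows from the standard exchange argument: EDF meets all deadlines whenever some assignment to the same supply does. Precedence follows from monotonicity. If $i\prec j$, then $R_i\le R_j$ and $D_i\le D_j$, with ties broken so that $i$ comes first. Hence whenever $j$ is available, $i$ is also available and has priority. So $j$ cannot run while $i$ is unfinished, which gives $C_i\le S_j$. A little care is needed at boundaries: a job may become available exactly when its predecessor completes, and this is fine because $C_i\le S_j$ allows equality.

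\textbf{Step 2: exact solution of the relaxation.} Let $I_1,\dots,I_q$ be the atomic intervals, with lengths $|I_k|$ and costs $c_k$. By Definition~\ref{def:atom}, each $I_k$ lies either entirely inside or entirely outside each window $[R_j,D_j)$. Build the following network:
\begin{itemize}
\item a source $s$ with an arc to each job node $j$ of capacity $p_j$ and cost $0$;
\item an arc $j\to I_k$ of capacity $|I_k|$ and cost $0$ whenever $I_k\subseteq[R_j,D_j)$;
\item an arc $I_k\to t$ of capacity $|I_k|$ and cost $c_k$.
\end{itemize}
Compute a min-cost flow of value $\sum_j p_j$. If the maximum flow is smaller, we report infeasibility; this is correct by Step 1 and the lower-bound direction.

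Conversely, any integral-structured (or even fractional) flow gives amounts $x_{jk}$ with $\sum_j x_{jk}\le|I_k|$ and $x_{jk}\le|I_k|$. By McNaughton's wrap-around rule, these amounts can be realized inside each $I_k$ as a preemptive single-machine schedule. Its cost equals the flow cost, since the idle remainder of $I_k$ costs nothing and the carbon cost is constant on $I_k$. Because $q=O(n+T)$, the network has polynomial size, and min-cost flow runs in strongly polynomial time.

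\textbf{Assembly and main obstacle.} Combining the flow schedule with the EDF re-sequencing of Step 1 yields an optimal precedence-feasible schedule. The main obstacle is the Step 1 exchange argument: we must prove that EDF, run on the fixed busy set with effective deadlines, remains feasible and respects precedence even though releases differ. I would handle this with the usual interchange proof. Take a feasible assignment and repeatedly swap, inside the busy set, work of a later-deadline job with earlier-available work of an earlier-deadline job. Each swap preserves both windows, because the job moved earlier has $R_j$ at most the current time. The precedence claim then rests only on the monotonicity of $(R_j,D_j)$ along $\prec$ together with the topological tie-break.
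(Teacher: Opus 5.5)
Your proof is correct, and the flow half matches the paper's. The paper orients the network as $s\to v_k\to u_j\to t$, and for its final algorithm it replaces a generic min-cost flow with an incremental max-flow that opens the arcs $s\to v_k$ in nondecreasing order of $c_k$. A strongly polynomial min-cost flow is enough for this theorem, though. On one machine, McNaughton's rule reduces to concatenating the pieces inside each $I_k$, which is what the paper does. You are also right to read $D_j$ as $\min_{j\preceq k}d_k$.

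The real difference is how precedence is restored. The paper's Lemma~\ref{lem:swap} works pairwise. For each $i\prec j$, taken in topological order of $i$, it refills the union of the busy times of $i$ and $j$ with all of $i$ first and all of $j$ last. It then checks that later swaps cannot break relations already established, for example $S_i'=\min\{S_i,S_j\}\geq C_h$ for $h\prec i$.

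You instead resequence the entire busy set once by preemptive EDF, with priority $(D_j,\text{topological index})$ and availability given by the effective releases $R_j$. Precedence then follows in one line from $R_i\leq R_j$, $D_i\leq D_j$ and the strict priority of $i$ over $j$. The remaining work is the classical fact that EDF is feasibility-optimal on a machine restricted to a fixed availability set $B$.

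Your route gives a single-pass, easily implemented correction with no invariant bookkeeping. It does lean on that exchange lemma, so you should state it fully:
\begin{itemize}
\item note that the job moved later lands at a time before $D_a\leq D_b$;
\item handle termination in continuous time, for example by arguing over the finitely many event times.
\end{itemize}
The paper's swaps are local and self-contained: each one visibly preserves the busy set without appealing to any scheduling optimality result.
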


The first important observation is that the precedence constraints can be restored without changing when the server is active. In particular, after replacing every job's original release time and deadline with the effective release time-deadline window, any feasible schedule that ignores the precedence constraints can be rearranged to satisfy the original DAG. This extends the flow-based algorithm of Penn and Raviv \cite{PennRaviv2021} for independent jobs to arbitrary precedence DAGs.

We also improve on the work of Penn and Raviv~\cite{PennRaviv2021} by obtaining a faster exact algorithm. Rather than solving a Hitchcock transportation problem, our algorithm enables interval capacity in nondecreasing order of carbon cost and incrementally augments a maximum flow. These updates increase only capacities on arcs leaving the source, while all other capacities remain fixed. The resulting
sequence therefore fits the monotone parametric maximum-flow framework
of Gallo et al. \cite{GGD}, which reuses the preflow
and distance labels across phases. This gives an $O(n^3)$ running time, improving on the $O(n^4\log n)$ bound stated by Penn and Raviv \cite{PennRaviv2021}. 

%{\em Can we improve this paragraph?}
%The key structural result is that, after tightening job windows using the precedence DAG, an independent-job schedule can be rearranged to satisfy every precedence constraint without changing the server’s busy-time set. Since carbon cost depends only on this set, precedence restoration does not increase the objective. We then formulate the allocation of processing to time intervals as a flow problem. Rather than solving a general minimum-cost-flow instance, our algorithm enables interval capacity in nondecreasing order of carbon cost and incrementally augments a maximum flow. Retaining the residual network allows later phases to rearrange earlier assignments without discarding the computation already performed.
%Note that, our result improves on the work of Penn and Raviv~\cite{PennRaviv2021} in two respects. We obtain a faster
%exact algorithm for the independent-job setting they study, and we extend the model to accommodate arbitrary precedence constraints while retaining polynomial-time exact solvability.

We consider the problem with {\em cloud outsourcing}. Allowing outsourcing makes the problem NP-hard, even for independent jobs with a common release time and deadline and zero local carbon cost. We show this hardness via a reduction from 0-1 knapsack problem. Thus we focus on an approximation algorithm. We obtain a fully polynomial-time approximation scheme: for every \(\varepsilon\in(0,1)\), the algorithm computes a solution of cost at most \((1+\varepsilon)\mathrm{OPT}_{\mathrm{cld}}\) in time polynomial in the binary input length and \(1/\varepsilon\) (See Theorem \ref{thm:cloud}).

\begin{thm}
\label{thm:cloud}
The cloud-outsourcing problem admits a fully polynomial-time approximation scheme: for every $\varepsilon\in(0,1)$, the algorithm returns a mixed schedule $(A,\mathcal S)$ satisfying
\[
  \cost_{\mathrm{cld}}(A,\mathcal S)
  \leq (1+\varepsilon)\OPT_{\mathrm{cld}}.
\]
Its run time is polynomial in the binary input length and $1/\varepsilon$.
\end{thm}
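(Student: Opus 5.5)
Our plan is to reduce to an edge-only-style flow problem with a dynamic program over outsourcing decisions, mirroring the paper's hint of a reduction to preemptive scheduling with rejection penalties. The first step is a structural lemma analogous to the one behind Theorem~\ref{thm:exact}. Fix the set $A$ of edge jobs. Every outsourced job $j\notin A$ can be placed at time $\theta_j$ equal to an endpoint of its effective window $[R_j,D_j]$; we take $\theta_j=R_j$, or $D_j$ when its ancestors force lateness. Because the effective windows are computed on the original DAG and are monotone along edges ($R_i\le R_j$, $D_i\le D_j$), placing cloud jobs at these endpoints never conflicts with precedence among edge jobs scheduled inside their own effective windows. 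Combined with the rearrangement argument for Theorem~\ref{thm:exact}, this shows that for fixed $A$ the optimal mixed cost is $\mathrm{OPT}_{\mathrm{edg}}$ of the independent instance on $A$ with windows $[R_j,D_j)$, plus $\sum_{j\notin A}\kappa_j$. The problem therefore becomes: choose $A$ so that the jobs of $A$ with windows $[R_j,D_j)$ are preemptively feasible, minimizing carbon cost plus rejection penalties. Precedence is now gone.

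The second step gives a pseudo-polynomial exact algorithm for this independent problem with rejection. We view the carbon-cost intervals as creating a single machine whose unit of time in atomic interval $I_k$ costs $c(I_k)$. We order jobs by effective deadline $D_j$. For independent preemptive jobs, feasibility of $A$ is characterized by Hall-type conditions on the atomic intervals, and an EDF-based dynamic program of Lawler type (as used for preemptive weighted number of late jobs, $1|r_j,pmtn|\sum w_jU_j$) can decide membership. The DP state records the amount of processing committed to each prefix of intervals. To make the state space polynomial we do \emph{not} round processing times. Instead we round the objective: the DP is indexed by jobs and by a rounded value of total cost (rejection penalties plus carbon). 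For each such state it stores the minimum processing load, or the "most flexible" residual capacity profile, in the Lawler style. The alternative is a DP whose value is total processing with cost as the index. A standard pseudo-polynomial DP with cost values bounded by $\mathrm{poly}(n)\cdot U$ follows.

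The third step is the FPTAS, obtained by scaling. We guess an upper bound $U$ with $U\le \mathrm{OPT}_{\mathrm{cld}}\le nU\cdot(\text{poly})$ by trying powers of $2$ (or using the polynomial-time edge-only solution as a reference). We then round each $\kappa_j$ and each carbon rate, multiplied by the interval length, to multiples of $\delta=\varepsilon U/\mathrm{poly}(n,q)$. This makes the cost index range polynomial in $n,q,1/\varepsilon$. The rounding error per job and per atomic interval is at most $\delta$, so the total error is at most $\varepsilon\,\mathrm{OPT}_{\mathrm{cld}}$. Finally, given the chosen $A$, we recompute its exact optimal edge schedule with Theorem~\ref{thm:exact}, which can only lower the cost.

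The main obstacle is the second step. Carbon cost is not additive per job in a way that fits a simple knapsack DP, because which intervals each job uses depends on the others. To address this, we would first express the cost of an accepted set via the greedy/matroid structure implicit in the incremental max-flow: accepted sets form a transversal-type matroid over interval-time units, and cheapest-first filling is optimal. We would then process atomic intervals left to right, with jobs ordered by deadline, keeping as DP state the rounded cost and the minimum backlog of pending accepted work. We expect the delicate point to be proving that "minimum backlog" is a sufficient dominance criterion, analogous to Lawler's argument for preemptive weighted late jobs.
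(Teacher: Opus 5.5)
There is a genuine gap. Your second step carries the entire algorithmic content of the theorem, and you leave it as an expectation. Lawler-type dynamic programs for $1\,|\,r_j,\mathit{pmtn}\,|\,\sum w_jU_j$ rely on the objective being a sum of per-job weights. Here the carbon term $\sum_k c_ky_k$ depends on how much of each atomic interval the accepted set consumes, and you never say how the DP charges it.

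The dominance you propose also fails as stated. A scalar backlog does not determine future feasibility when pending jobs have different deadlines: a small backlog due soon can be worse than a larger one due late. The matroid remark does not rescue this. With all-or-nothing acceptance the feasible accepted sets do not form a matroid (the problem contains knapsack), and cheapest-first filling only prices a \emph{fixed} accepted set.

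Your third step inherits the problem. Rounding $c_k\ell_k$ says nothing about the cost $c_ky_k$ of a partially used interval. Rounding the rate $c_k$ to multiples of $\delta$ instead can err by $\delta\ell_k$ per interval, and $\ell_k$ may be exponential in the input length.

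The missing idea is to encode carbon cost as rejection penalties, so that no new DP is needed. For each atomic interval $I_k$ of length $\ell_k$, the paper adds dummy jobs as follows:
\begin{itemize}
\item each dummy has window exactly $I_k$;
\item the lengths are $2^0,\dots,2^{m-1},\ell_k-2^m+1$ with $m=\lfloor\log_2(\ell_k+1)\rfloor$;
\item the dummy of length $h$ has penalty $c_kh$.
\end{itemize}
Each original job gets window $[R_j,D_j)$ and penalty $\kappa_j$. Rejecting dummies of total length $z_k$ frees $z_k$ units of $I_k$ at cost exactly $c_kz_k$.

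Because these subset sums cover $\{0,\dots,\ell_k\}$, an integral optimal flow for the optimal edge set gives $\OPT_{\mathrm{pen}}\leq\OPT_{\mathrm{cld}}$. Conversely, a penalty solution of weight $W$ puts $y_k\leq z_k$ units of original work in $I_k$. After precedence restoration it therefore yields a mixed schedule of cost at most $W$, using $c_k\geq 0$. Only $O(\sum_k\log\ell_k)$ dummies are added, so the Pruhs--Woeginger FPTAS for the penalty problem applies as a black box, with all weight rounding done inside it.

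A smaller point concerns your first step. Its conclusion is correct: for fixed $A$, the cost equals the edge-only optimum of $A$ on effective windows plus $\sum_{j\notin A}\kappa_j$. However, cloud jobs cannot in general be placed at an endpoint of $[R_j,D_j]$. Take a chain $i\prec j\prec k$ with $i,k\in A$, all windows $[0,10)$, $p_i=5$, $p_k=4$, and carbon $0$ on $[0,9)$ and $1$ on $[9,10)$. The optimal busy set forces $C_i=S_k=5$, hence $\theta_j=5$, which is neither endpoint.

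The correct placement is $\theta_j=\max(\{R_j\}\cup\{C_i:i\in A,\ i\prec j\})$ after restoring precedence within $A$. You must then check the cloud-to-edge constraints, using the fact that edge-ancestor sets are nested along precedence paths.
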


We give an approximation-preserving reduction to single-machine preemptive scheduling with rejection penalties, equivalently minimizing the weighted number of tardy jobs\footnote{ In this problem, jobs have to be scheduled pre-emptively and have processing times, release times and deadlines, and the goal is to minimize the weight of the jobs that miss their deadlines.}. 
However we cannot use a direct reduction as the cost to schedule jobs is time dependent in our setting.
The reduction assigns rejection penalties to original jobs to represent outsourcing and introduces a compact collection of dummy jobs to represent local time capacity. Rejecting dummy jobs makes capacity available to original jobs and charges the corresponding carbon cost. Applying the FPTAS of Pruhs and Woeginger \cite{PruhsW07}, based on Lawler’s dynamic program \cite{Lawler}, gives the stated guarantee.

We also show that the single-server assumption matters: with two edge servers and no migration between them, even the edge-only problem admits no approximation guarantee polynomial in the number of jobs when carbon-cost variation is unrestricted, unless \(P=NP\). This holds despite allowing preemption on each server and giving both servers the same carbon-cost function.

Finally, our experiments show that the max-flow formulation also provides a substantial computational advantage in practice. On synthetic workloads, \textsc{MaxFlow-Passes} is $4.96\times$ faster than direct minimum-cost flow on average. Reusing the residual network across carbon-cost phases provides an additional $2.07\times$ improvement over restarting the max-flow computation at each phase. The implementation scales to flow networks with tens of millions of job--interval edges. Experiments on workloads derived from the Alibaba cluster trace show a similar advantage, with \textsc{MaxFlow-Passes} reaching approximately $10\times$ the speed of \textsc{MinCostFlow} on some of the largest tested instances. We also evaluate Lawler's exact dynamic program \cite{Lawler} for the cloud-outsourcing reduction and observe the expected pseudopolynomial growth with the total penalty weight.

% \textcolor{red}{@cecelia - add experimental results here.}

%\subsection{Outline of Paper}

\subsection{Organization of the paper} 
%Section \ref{sec:cloud} establishes the hardness for the pause-and-resume model. 
Section \ref{sec:exact} presents the exact edge-only algorithm. Section \ref{sec:cloud} presents the approximation scheme for cloud outsourcing. Section \ref{multiedge} proves the hardness result for multiple edge servers without migration, and Section \ref{sec:exp} presents our experimental evaluation: we compare the exact solvers, study the benefit of residual reuse and scalability, evaluate the algorithms on Alibaba cluster traces, and examine the scalability of Lawler's algorithm for the cloud-outsourcing reduction.

\section{Exact scheduling on a single edge server}\label{sec:exact}

% what if we have a constant number of distinct cost values

We start by giving an exact algorithm for preemptively scheduling jobs on a single edge server with a piecewise carbon cost function. Let \(I_1,\ldots,I_q\) be the atomic intervals defined by all release times, deadlines, and changes in carbon cost. Recall that carbon cost \(c_k\) is uniform over the interval \(I_k\). We construct a flow network \(\cF\) to model our scheduling problem. Our network has a source node \(s\) and a sink node \(t\). For each interval \(I_k\), we have a node \(v_k\), and for each job \(j\), we have a node \(u_j\). We then define the arcs:
\begin{center}
\begin{tabular}{lc|c|c}
arc & & capacity & cost \\
\hline
\(s\to v_k\) & & \(|I_k|\) & \(c_k\) \\
\(v_k\to u_j\) & if \(I_k\subseteq[R_j,D_j)\) & \(p_j\) & 0 \\
\(u_j\to t\) & & \(p_j\) & 0
\end{tabular}
\end{center}

Let \(P=\sum_{j\in J}p_j\) be the total processing required. If \(\cF\) has a flow of value \(P\), then we can construct a feasible schedule. This saturating flow assigns \(f(v_k,u_j)\) units of job \(j\) to be completed during interval \(I_k\). In order to convert the flow into a schedule, we arbitrarily order the jobs \(j\) with \(f(v_k,u_j)>0\) and complete \(f(v_k,u_j)\) units of each job in that order. When we do this for each interval, we get a preemptive schedule that completes \(p_j\) units of each job during the interval \([R_j,D_j)\). This gives us a preemptive schedule with cost \(\sum_{k=1}^qc_kf(s,v_k)\). Conversely, given a feasible preemptive schedule, we can return a flow network with value \(P\) by allowing \(f(s,v_k)\) to be the amount of time the edge server is running during interval \(I_k\) and \(f(v_k,u_j)\) to be the amount of job \(j\) completed during interval \(I_k\); set \(f(u_j,t)=p_j\). This returns a flow with cost \(\sum_{k=1}^qc_kf(s,v_k)\). Therefore, we have an equivalence between the costs of minimum-cost schedules and minimum-cost flows.

\subsection{Post-processing for precedence constraints}

An important observation is that this algorithm considers feasibility only in terms of the effective windows, not the precedence constraints.
\begin{lem}\label{lem:swap}
Any schedule that is feasible with respect to the effective windows can be converted into one that follows the precedence constraints and does not increase carbon cost.
\end{lem}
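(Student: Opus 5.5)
Keep the busy set $B(\cS)$ fixed and only permute which job occupies each busy instant; since carbon cost depends only on $B(\cS)$, it cannot change. Concretely, treat $B(\cS)$ as a fixed collection of time slots (finitely many pieces, since the given schedule can be taken piecewise on atomic intervals). Refill these slots by a preemptive earliest-effective-deadline rule: at each busy instant $t$, among jobs with $R_j\le t$ and remaining work, run one with smallest $D_j$. Break ties consistently with a fixed topological order of $G$. This is a linear extension refined by $D_j$, which is compatible because $i\prec j$ implies $D_i\le D_j$.

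\textbf{Step 1 (feasibility).} The new assignment still completes every job within $[R_j,D_j)$. This is the standard exchange argument for EDF on a fixed supply of machine time. Take the original schedule restricted to $B(\cS)$. Whenever at some instant it runs job $a$ while an available job $b$ has $D_b<D_a$, or equal deadline with $b$ earlier in the tie order, swap an equal amount of processing between the current slot and a later slot of $b$ in $B(\cS)$. Job $b$ only moves earlier and stays after $R_b$. Job $a$ moves to a slot before $C_b\le D_b\le D_a$. Iterating converges to the EDF-with-ties assignment without violating windows, because each swap respects release times and moves work earlier for the higher-priority job. One needs a potential argument (e.g.\ lexicographic order of the slot-to-job assignment) to show termination over the finitely many atomic pieces.

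\textbf{Step 2 (precedence).} Let $i\prec j$. Then $R_i\le R_j$ and $D_i\le D_j$, and $i$ precedes $j$ in the tie order, so $i$ has strictly higher priority than $j$ at every instant. At any time $t\ge R_j$ when $i$ still has work, $i$ is also released, since $R_i\le R_j\le t$. Hence $j$ is never run while $i$ is unfinished, which gives $S_j\ge C_i$. This is exactly the constraint for every edge $(i,j)\in E$, and the busy set, hence the cost, is unchanged.

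\textbf{Main obstacle.} The main obstacle is Step 1: making the exchange argument rigorous with continuous time and ties. The swaps must never push a job before its effective release time. They also must not break deadline feasibility of a third job. I would handle this by discretizing into the atomic intervals intersected with $B(\cS)$ and performing swaps between the earliest offending slot and a later slot. If that becomes messy, the fallback is to argue directly that EDF on the supply $B(\cS)$ is optimal. If EDF misses $D_j$, let $t'$ be the last time before $D_j$ at which a job of priority lower than $j$ (or idle supply) was used. All jobs run in $[t',D_j)\cap B(\cS)$ then have $R\ge t'$ and $D\le D_j$, and their total work exceeds the supply there. This contradicts the feasibility of the original schedule.
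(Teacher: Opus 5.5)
Your proof is correct, but it takes a different route from the paper. The paper never invokes EDF. It performs local two-job repairs: sweeping $i$ in topological order, for each descendant $j$ with $i\prec j$ it refills the union of the slots of $i$ and $j$, with $i$ on the first $p_i$ units and $j$ on the last $p_j$ units. Window feasibility of each repair follows in one line from $R_i\leq R_j$ and $D_i\leq D_j$. The rest of that proof is an invariant argument that later repairs never undo earlier ones; each repair only moves the earlier job earlier and the later job later.

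You instead rebuild the whole assignment at once by running EDF on the fixed supply $B(\mathcal{S})$ with priority $(D_j,\text{topological rank})$. This turns the precedence step into a one-line dominance argument and removes all the inter-repair bookkeeping. The price is that you need EDF optimality on an arbitrary supply set. Your fallback argument is the right proof of that: take the last instant $t'$ before $D_j$ at which supply is idle or given to a lower-priority job, or $t'=0$ if there is none. It also makes the loosely sketched exchange argument of Step~1 unnecessary.

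Two small fixes are needed in that fallback. First, the overloaded job set should be the jobs run in $[t',D_j)\cap B(\mathcal{S})$ together with $j$ itself, since $j$ may receive no processing there. Second, state why each of these jobs has $R\geq t'$: it still had remaining work just before $t'$, when EDF preferred a lower-priority job or idled.

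What each route buys: yours gives a single priority-queue pass that produces a canonical schedule depending only on $B(\mathcal{S})$. The paper's stays entirely elementary, at the cost of up to quadratically many pairwise repairs.
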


\begin{proof}
Let \(\cS\) be a feasible schedule returned by the flow algorithm. Take jobs \(i,j\) with \(i\prec j\). In \(\cS\), \(i\) and \(j\) are completed during the intervals \([S_i,C_i)\) and \([S_j,C_j)\). We want to make an adjustment \(g_{i,j}(\cS)=\cS'\) such that \(C_i'\leq S_j'\), meaning that \(i\) completes before \(j\). Let \(B_{i,j}(\cS)\) be the collection of disjoint intervals during which the edge server is running jobs \(i\) and \(j\). The adjusted schedule \(g_{i,j}(\cS)\) completes job \(i\) during the first \(p_i\) units of \(B_{i,j}(\cS)\) and then completes job \(j\) during the last \(p_j\) units of \(B_{i,j}(\cS)\). This guarantees that \(C_i'\leq S_j'\). Moreover, since \(R_i\leq R_j\) and \(D_i\leq D_j\), and job \(i\) is completed earlier while job \(j\) starts later, we maintain that \([S_i',C_i')\subseteq[R_i,D_i)\) and \([S_j',C_j')\subseteq[R_j,D_j)\).

We complete this process in topological order. For each job \(i\), we make the swap \(g_{i,j}(\cS)\) for each \(j\) such that \(i\preceq j\). Therefore, we complete a polynomial number of swaps. After we have processed job \(i\), we have \(C_i'\leq S_j'\) for each \(j\succeq i\). We need to show that this invariant is maintained for this pair in later steps of the algorithm. Assume \(h\prec i\prec j\), and we are processing \(i\). Before completing this step, we have \(C_h\leq S_i\) and \(C_h\leq S_j\). After processing, \(S_i'=\min\{S_i,S_j\}\geq C_h\), so there is no retroactive violation of precedence constraints with \(h\).

Observe that the initial schedule \(\cS\) and the final schedule \(\cS'\) occupy the edge server during the exact same intervals of time. As a result, these schedules have identical costs.
\end{proof}

\subsection{Implementation without minimum-cost flow}
Due to the simplified nature of the cost function defined for \(\cF\), we can avoid using a min-cost flow algorithm by gradually allowing flow through vertices \(v_k\) in order of increasing cost \(c_k\). Start with the capacities of arcs originating from \(s\) set to 0. At each step \(h=1,\ldots,q\), allow the capacity of arc \(s\to v_h\) to change to \(|I_h|\) in order of increasing cost, breaking ties arbitrarily. Push flow through the network until no augmenting paths remain. The residual flow is retained after each step of the algorithm. Let \(F_h\) be the flow value at the end of step \(h\), meaning that we can schedule \(F_h\) units of work on the edge server during the cheapest \(h\) intervals. We have \(F_0=0\). If \(F_q<P\), then there is no feasible schedule.

\begin{lem}
Assume that \(F_q=P\). Let \(y_k\) be the amount of flow on \(s\to v_k\) after the incremental max flow procedure, and let \(z_k\) be the amount of flow on \(s\to v_k\) induced by any other min-cost flow of value \(P\). Then, \(\sum_{h=1}^qc_hz_h>\sum_{h=1}^qc_hy_h\).
\end{lem}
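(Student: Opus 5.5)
The plan is to show that the incremental max-flow output \(y\) is itself a minimum-cost flow of value \(P\), by an exchange argument on prefix sums. There is a caveat about the strict inequality as written. If \(z\) is another min-cost flow, then \(\sum_h c_h z_h=\sum_h c_h y_h\) once \(y\) is shown to be optimal, so the strict form cannot hold for that case. I will therefore prove \(\sum_h c_h z_h\geq\sum_h c_h y_h\) for every flow \(z\) of value \(P\). I will also characterize exactly when the inequality is strict: this happens precisely when \(z\) is \emph{not} of minimum cost, which is the content the lemma is after. Throughout, index intervals in the order the procedure enables them, so that \(c_1\leq c_2\leq\cdots\leq c_q\).

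\textbf{Step 1 (arcs out of \(s\) never lose flow).} Every augmenting path in the residual network is a simple \(s\)–\(t\) path. It therefore leaves \(s\) once and never re-enters it. The residual arc \(v_k\to s\) can never appear on such a path, since it would re-enter \(s\). Hence the flow on each \(s\to v_k\) is nondecreasing over the run of the algorithm. Moreover, after step \(h\) only the arcs \(s\to v_k\) with \(k\leq h\) have positive capacity. It follows that the final values satisfy \(Y_h:=\sum_{k\leq h}y_k=F_h\) for each \(h\). This uses that flow on arcs \(s\to v_k\) with \(k\leq h\) is never increased after step \(h\), because no augmenting path exists in that restricted network at the end of step \(h\), and enabling later arcs does not add capacity there.

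\textbf{Step 2 (prefix dominance).} Let \(\cF_h\) be \(\cF\) with the capacities of \(s\to v_k\) set to \(0\) for \(k>h\). At the end of step \(h\) no augmenting path exists, so \(F_h\) is the maximum flow value of \(\cF_h\). Now take any flow \(z\) of value \(P\) and restrict it to the paths through \(v_1,\ldots,v_h\). This is a feasible flow in \(\cF_h\) of value \(Z_h:=\sum_{k\leq h}z_k\). Flow decomposition works here because every \(s\)–\(t\) path has the form \(s\to v_k\to u_j\to t\). Hence \(Z_h\leq F_h=Y_h\) for all \(h\), and \(Z_q=Y_q=P\).

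\textbf{Step 3 (Abel summation).} Summation by parts gives
\[
\sum_{h=1}^q c_h(z_h-y_h)=c_q(Z_q-Y_q)+\sum_{h=1}^{q-1}(c_h-c_{h+1})(Z_h-Y_h).
\]
The first term vanishes, and each summand in the second is a product of two nonpositive factors. Therefore \(\sum_h c_h z_h\geq\sum_h c_h y_h\), so \(y\) is a min-cost flow. Equality holds iff \(Z_h=F_h\) at every \(h\) with \(c_h<c_{h+1}\). So the inequality is strict exactly when some such prefix is under-filled by \(z\), which is exactly when \(z\) is not of minimum cost.

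The step I expect to need the most care is Step 1, namely the claim \(Y_h=F_h\) for the \emph{final} flow. The danger is that later phases might reroute flow away from cheap intervals. The simple-path observation rules out any decrease of flow on an arc leaving \(s\), and this monotonicity is the crux of the whole argument.
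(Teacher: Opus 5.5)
Your proof is correct and follows the paper's argument. Monotonicity of the source arcs (a simple augmenting path never re-enters \(s\)) gives \(\sum_{k\leq h}y_k\geq F_h\), restricting a flow to \(v_1,\ldots,v_h\) bounds every prefix sum by \(F_h\), and summation by parts finishes. Your non-strict inequality is what the paper's argument actually proves: its \(>\) and \(<\) signs are evidently meant as \(\geq\) and \(\leq\), since it concludes the algorithm's cost is "no more than" that of any min-cost flow.

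The one loose spot is your Step~1 justification that the first \(h\) source arcs never gain flow after step \(h\). Later augmentations change the residual network, so the absence of an augmenting path at the end of step \(h\) does not settle this by itself. The fix is the paper's own reverse-inequality step: apply your Step~2 restriction bound to \(z=y\) to get \(Y_h\leq F_h\) directly.
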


\begin{proof}
We first show an equivalence between the intermediate flow values \(F_h\) and the final flow \(y\). During step \(h\), the algorithm looks for an augmenting path \(s\to t\). However, it cannot use an arc \(v_k\to s\), as doing so would simply return the flow to \(s\). Therefore, after the completion of this step, the flow through the first \(h\) arcs \(s\to v_k\) cannot decrease, and \(\sum_{k=1}^hy_k>F_h\). 

For the reverse inequality, we consider the final flow and restrict it to the first \(h\) intervals. This is a feasible flow for scheduling on the cheapest \(h\) intervals. By the definition of \(F_h\), we have \(\sum_{k=1}^hy_k<F_h\). Therefore, we conclude that, for each \(h=1,\ldots,q\), we have 
\begin{equation}\label{eq:yF}
\sum_{k=1}^hy_k=F_h.
\end{equation}

Now, we consider our other flow \(z\). Let \(z_0=0\). We have 
\begin{align*}
&\phantom{=}\quad\sum_{h=1}^qc_hz_h = \sum_{h=1}^q c_h\lrp{\sum_{k=0}^hz_k-\sum_{k=0}^{h-1}z_k} \\
&= -c_1\sum_{k=0}^0z_k +\sum_{h=1}^{q-1}\lrp{\lrp{c_h-c_{h+1}}\sum_{k=0}^hz_k} +c_q\sum_{k=0}^qz_k \\
&> \sum_{h=1}^{q-1}\lrp{c_h-c_{h-1}}F_h +c_qF_q \tag{Def. of \(F_h\)} \\
&= \sum_{h=1}^qc_h\lrp{F_h-F_{h-1}} = \sum_{h=1}^qc_hy_h \tag{Eq. \eqref{eq:yF}},
\end{align*}
and our incremental max flow algorithm finds a flow of value \(P\) with cost no more than any min-cost flow with value \(P\).
\end{proof}

By using the incremental flow algorithm and then correcting the schedule using Lemma \ref{lem:swap}, we prove Theorem \ref{thm:exact}.
\section{Approximation with cloud outsourcing}
\label{sec:cloud}

The cloud-outsourcing problem is NP-hard by a straightforward reduction from 0–1 Knapsack, even for independent jobs with a common release time, a common deadline, and zero carbon cost: selecting jobs to execute locally corresponds to packing items, with processing times as sizes and avoided outsourcing costs as values. Hardness persists even when \(\kappa_j=\lambda p_j\) for any fixed \(\lambda>0\), by a reduction from Subset Sum. We next obtain an FPTAS through a reduction to single-machine preemptive scheduling with rejection penalties.

\subsection{Reduction to the penalty variant}

%We first show that precedence can be restored without changing busy times even with cloud outsourcing and, then show how the problem can be reduced to the problem of minimizing the total penalty of rejected jobs in preemptive single-machine scheduling.

We first show that we have a version of Lemma \ref{lem:swap} that extends to the cloud outsourcing variant. Fix any job set $A$ run on the edge server with a feasible independent job schedule in the effective windows. Apply the order-restoring construction from Lemma~\ref{lem:swap} to the DAG \(G\) restricted to the jobs in \(A\). This restores every edge server ancestor-descendant relation without changing the intervals of time at which the edge server is running. Now, for every cloud job $j\notin A$, set
\begin{equation}\label{eq:cloud-execution}
\theta_j = \max\lrp{\{R_j\}\cup\{C_i:i\in A,\ i\prec j\}}.
\end{equation}
Every edge server ancestor $i$ satisfies $C_i\leq D_i\leq D_j$, so $r_j\leq R_j\leq\theta_j\leq D_j\leq d_j$. Edge-to-cloud precedences follow directly from \eqref{eq:cloud-execution}. Cloud-to-cloud precedences hold because both effective releases and the sets of edge server ancestors are nested along precedence paths. Finally, we maintain cloud-to-edge precedences. If a cloud job $i$ precedes a local job $j$, then $R_i\leq R_j\leq S_j$, and every local ancestor of $i$ is also a local ancestor of $j$ and finishes by $S_j$. Thus $\theta_i\leq S_j$. This proves that any feasible independent schedule of $A$ can be extended to a feasible mixed schedule.

\begin{defn}
In the penalty-scheduling problem, jobs have parameters \((p_j,r_j,d_j,w_j)\), where the first three are defined as above, and \(w_j\) is a nonnegative penalty. If a job does not complete within the window \([r_j,d_j)\), we pay the penalty \(w_j\). The objective is to select a subset of jobs to feasibly schedule such that the total penalty paid is minimized.
\end{defn}

Pruhs and Woeginger~\cite{PruhsW07} give an FPTAS
for this minimization problem with arbitrary release times and due dates. This is built on the exact algorithm due to Lawler based on dynamic programming \cite{Lawler}.

Next, we will reduce our {\em cloud-outsourcing} problem to the penalty-scheduling problem. We asumme that all processing requirements, releases, deadlines, and carbon-change times are integers. 

%By a common rescaling, we may assume that all processing requirements, releases, deadlines, and carbon-change times are integers. Specifically, if $Q$ is a common denominator, replace time $t$ by $t'=Qt$ and replace the carbon rate by
%\[
%c'(t')=\frac{c(t'/Q)}{Q}.
%\]
%Then
%\[
%c'(t')\,dt'=c(t)\,dt,
%\]
%so all schedule costs are preserved.

\noindent\textbf{Reduction:} For every original job $j\in J$, create a job in the penalty instance with
parameters $(p_j,R_j,D_j,w_j=\kappa_j)$. Next we will construct some dummy jobs.

For each interval $I_k$ of length $\ell_k$, we want to create $O(\log\ell_k)$ dummy jobs such that the sum of processing times of these jobs is $\ell_k$ and the subset sums of the processing times include every integer in $\{0,\ldots,\ell_k\}$. Let \(m=\lfloor\log_2(\ell_k+1)\rfloor\). A suitable set of job lengths is \(P_k=\lrs{2^0,2^1,\ldots,2^{m-1},\ell_k-2^m+1}\). The subsets of the powers of two sum to all values in $\lrs{0,1,\ldots,2^m-1}$. Since $0\leq b\leq2^m-1$, adding allows us to sum to all values in $\{0,\ldots,\ell_k\}$.

For each job length \(h\in P_k\) create a dummy job $g_{k,h}$ with
\[
p(g_{k,h})=h, r(g_{k,h})=t_{k-1}, d(g_{k,h})=t_k, w(g_{k,h})=c_kh.
\]

The dummy jobs for $I_k$ have total processing requirement $\ell_k$. Scheduling some of them after their deadline releases capacity in that interval and incurs a penalty equal to the carbon cost of the released capacity. Let $\OPT_{\mathrm{pen}}$ be the minimum total penalty in the constructed instance.

\subsection{Proof of correctness}

\begin{lem}\label{lem:eq-opt}
\(\OPT_{\mathrm{pen}}=\OPT_{\mathrm{cld}}\).
\end{lem}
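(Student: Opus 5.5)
We prove the equality by establishing both inequalities $\OPT_{\mathrm{pen}}\leq\OPT_{\mathrm{cld}}$ and $\OPT_{\mathrm{cld}}\leq\OPT_{\mathrm{pen}}$. In each direction we convert a solution of one problem into a solution of the other without increasing cost. Throughout, we use the flow formulation of Section~\ref{sec:exact}: a set $A$ of jobs is feasible on the edge server in the effective windows exactly when there are amounts $x_{j,k}\geq 0$ such that
\begin{itemize}
\item $x_{j,k}>0$ only if $I_k\subseteq[R_j,D_j)$;
\item $\sum_k x_{j,k}=p_j$ for every $j\in A$;
\item $\sum_j x_{j,k}\leq \ell_k$ for every interval $I_k$.
\end{itemize}
We also use the extension argument preceding the reduction, which turns any such independent schedule of $A$ into a feasible mixed schedule of the same edge cost. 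Since all data are integers, an integral flow exists, so each $x_{j,k}$ can be taken integral. This integrality is what allows the binary dummy lengths to match exactly.

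\textbf{Direction $\OPT_{\mathrm{pen}}\leq\OPT_{\mathrm{cld}}$.} Take an optimal mixed schedule $(A,\cS)$ and, for each $k$, let $b_k=|B(\cS_A)\cap I_k|$. We may assume $b_k$ is an integer: pass to an optimal integral flow with the same set $A$, which costs no more. Because $0\leq b_k\leq\ell_k$, the subset-sum property of $P_k$ gives a subset of dummy jobs of $I_k$ with total length exactly $b_k$. We reject exactly those dummy jobs, together with all original jobs outside $A$.
\begin{itemize}
\item \emph{Feasibility.} In each $I_k$, the accepted dummies occupy $\ell_k-b_k$ units and the jobs of $A$ occupy $b_k$ units. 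Jobs of $A$ stay within $[R_j,D_j)$, which are exactly their penalty-instance windows, and dummies stay within $I_k=[t_{k-1},t_k)$. So the accepted set is feasibly scheduled.
\item \emph{Cost.} The penalty paid is $\sum_{j\notin A}\kappa_j+\sum_k c_kb_k$, which equals $\cost_{\mathrm{cld}}(A,\cS)$.
\end{itemize}

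\textbf{Direction $\OPT_{\mathrm{cld}}\leq\OPT_{\mathrm{pen}}$.} Take an optimal penalty solution. Let $A$ be the set of accepted original jobs, and let $a_k$ be the total length of accepted dummies of $I_k$. Accepted dummies must run inside $I_k$, so in any feasible schedule of the accepted set, the original jobs use at most $\ell_k-a_k$ units of $I_k$. The rejected dummy length in $I_k$ is $\ell_k-a_k$, so the penalty is at least
\[
\sum_{j\notin A}\kappa_j+\sum_k c_k\,|B_A\cap I_k|,
\]
where $B_A$ is the time the original jobs of $A$ occupy in $I_k$. Deleting the accepted dummies leaves an independent schedule of $A$ within the effective windows, whose edge cost is $\sum_k c_k|B_A\cap I_k|$. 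Applying the extension argument (Lemma~\ref{lem:swap} on $G$ restricted to $A$, plus the choice of $\theta_j$ in \eqref{eq:cloud-execution}) yields a feasible mixed schedule of cost at most the penalty.

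\textbf{Main obstacle.} The delicate step is the first direction, where the optimal cloud schedule's per-interval busy amounts must be integral so that they are realizable as subset sums of $P_k$. We resolve this with integrality of max flow: the capacities $\ell_k$ and $p_j$ are integral, so for fixed $A$ the min-cost flow has an integral optimum, and by the earlier lemma the incremental procedure attains it. A secondary check is that the dummy windows coincide with the atomic intervals. This holds because the atomic intervals include all the original endpoints, and hence all the effective ones, so an accepted original job and the dummies interact only through per-interval capacity.
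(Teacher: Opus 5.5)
Your proof is correct and follows the paper's argument essentially step for step. In one direction you pass to an integral min-cost flow for the fixed edge set $A$ and reject dummy jobs of $I_k$ whose lengths, by the subset-sum property of $P_k$, total exactly the busy volume in $I_k$. In the other you bound the volume of $A$ in $I_k$ by the rejected dummy length, then apply Lemma~\ref{lem:swap} and the cloud placement \eqref{eq:cloud-execution}. Your backward direction never uses optimality of the penalty solution, and that is exactly what the paper relies on to carry the FPTAS's $(1+\varepsilon)$ guarantee back to the cloud problem.
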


\begin{proof}
We start by showing that $\OPT_{\mathrm{pen}}\leq\OPT_{\mathrm{cld}}$. Take an optimal mixed schedule and let $A$ be the job set scheduled on edge server. For this fixed set, solve the minimum-cost interval-job flow problem on the effective windows. Its cost is at most the original edge server carbon cost. All capacities and job requirements are integral, so it has an integral optimal flow. Let $y_k$ be its total processing volume in $I_k$; then $y_k\in\{0,\ldots,\ell_k\}$.
    
Accept the original jobs in $A$ and reject the others. In interval $I_k$, reject a subset of dummy jobs of total length exactly $y_k$, which exists by the subset-sum property. Accept the remaining dummy jobs, whose total length is $\ell_k-y_k$. The original job allocation and accepted dummy jobs now have total volume $\ell_k$ in $I_k$. Every participating job is eligible throughout that interval, so these allocations can be serialized to give a feasible accepted-job schedule. Its penalty is
\[
  \sum_{j\notin A}\kappa_j+\sum_{k=1}^q c_ky_k
  \leq\OPT_{\mathrm{cld}}.
\]

Now, we show that $\OPT_{\mathrm{pen}}\geq\OPT_{\mathrm{cld}}$. Consider any feasible accepted job schedule of total rejection penalty
$W$. Let $A$ be its accepted original jobs. In interval $I_k$, let $z_k$ be the total processing requirement of the rejected dummy jobs, and let $y_k$ be the processing volume of jobs in $A$. All accepted dummy processing for this interval must occur within $I_k$, so capacity gives
\[
  y_k+(\ell_k-z_k)\leq\ell_k,
  \qquad\text{hence}\qquad y_k\leq z_k.
\]
Delete the dummy jobs, keep $A$ on edge server, and outsource $J\setminus A$ to cloud. Restore precedence and insert cloud jobs using~\eqref{eq:cloud-execution}. The edge server busy-time set is preserved, and nonnegativity of the tariff implies
\[
  \cost_{\mathrm{cld}}(A,\mathcal S)
  =\sum_{j\notin A}\kappa_j+\sum_{k=1}^q c_ky_k
  \leq\sum_{j\notin A}\kappa_j+\sum_{k=1}^q c_kz_k
  =W.
\]
The cost of any feasible solution can only be greater than the cost of optimal schedule, therefore, $\cost_{\mathrm{cld}}(A,\mathcal S)=W\geq\OPT_{\mathrm{pen}}$.
\end{proof}

In fact, the proof shows that the backward direction preserves every multiplicative approximation guarantee. 

By applying the reduction above and then reordering to follow precedence constraints, we achieve Theorem \ref{thm:cloud}.

\section{Hardness for Multi-edge-server scheduling}
\label{multiedge}

In this section, we show that the multi-edge-server scheduling problem is hard to approximate even with only two edge servers
and edge-only setting. Each server has unit speed and processes at most one job at a time. Preemption is allowed for free, but migration between servers is forbidden: every job must receive all of its processing on a single server. The objective is to minimize the total carbon cost across the servers while satisfying all release times, deadlines, and precedence constraints.

Unlike the single-server edge-only problem, which admits an exact
polynomial-time algorithm, this problem admits no approximation
guarantee polynomial in the number of jobs when carbon-cost
variation is unrestricted, unless $\mathrm{P}=\mathrm{NP}$.

\begin{thm}
\label{thm:multiple-edge-hardness}
Unless $\mathrm{P}=\mathrm{NP}$, the edge-only scheduling problem on two servers, with free preemption but no migration, admits no polynomial-time approximation algorithm whose approximation ratio is bounded by a polynomial in the number of jobs $n$, when the ratio between the largest and smallest carbon costs is unrestricted.
\end{thm}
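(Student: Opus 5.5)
We prove this with a gap reduction from \textsc{Partition}. The reduction maps yes-instances to schedules of cost \(0\) and no-instances to schedules of cost at least \(1\). Once the optimum can be \(0\), no multiplicative ratio can separate the two cases, so any polynomial-ratio algorithm would decide \textsc{Partition}. If one insists on strictly positive carbon costs, we replace \(0\) by a tiny cost \(\delta\) and the expensive cost by a large \(M\). The ratio \(M/\delta\) is allowed to be unrestricted, and choosing \(M/\delta \gg n^{c}\) for every fixed \(c\) (for instance exponential in \(n\), which is still polynomial in bit length) again separates yes-instances from no-instances.

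\textbf{Construction.} Take a \textsc{Partition} instance \(a_1,\ldots,a_m\) with \(\sum a_i = 2B\). Create \(n=m\) independent jobs with \(p_i=a_i\), common release time \(0\), and common deadline \(2B\). There are no precedence constraints, which makes the hardness stronger. The carbon cost is \(c(t)=\delta\) on \([0,B)\) and \(c(t)=M\) on \([B,2B)\).

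\textbf{Completeness.} Suppose the instance has a partition into two sets, each of sum \(B\). Assign one set to server 1 and the other to server 2. Each server then runs its jobs back to back on \([0,B)\). The cost is \(2B\delta\).

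\textbf{Soundness.} Without migration, each job lies entirely on one server. The two servers' loads \(L_1,L_2\) are therefore sums of disjoint subsets with \(L_1+L_2=2B\). If no equal partition exists, then \(\max(L_1,L_2)\ge B+1\), using integrality. The cheap window on each server has length \(B\), so at least one unit of processing falls in \([B,2B)\). The cost is therefore at least \(M\).

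\textbf{Conclusion.} Set \(\delta=1\) and \(M=2B\cdot 2^{n}+1\), whose encoding length is polynomial. Suppose an algorithm had ratio \(\rho(n)\le\mathrm{poly}(n)\), so \(\rho(n) < 2^n\) for large \(n\). On yes-instances it would return cost at most \(\rho(n)\cdot 2B < M\), while on no-instances every schedule costs at least \(M\). Comparing the returned cost with \(M\) would therefore decide \textsc{Partition} in polynomial time.

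\textbf{Points to handle carefully.} Two details need care. The first is confirming that preemption on a single server cannot circumvent the capacity bound: the cheap time on each server is at most \(B\) regardless of how jobs are preempted. The second is that \textsc{Partition} is only weakly NP-hard. This is fine here because the job lengths appear in binary, consistent with the paper's binary input model.

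The main subtle step is the quantitative gap. We must ensure that \(M\) remains polynomially encodable while exceeding every polynomial ratio. Choosing \(M\) exponential in \(n\) resolves this, and the instance size stays polynomial because \(\log M = O(n+\log B)\).
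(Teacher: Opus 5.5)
Your proof is correct and follows the paper's argument. You use the same \textsc{Partition} reduction: jobs $p_j=a_j$ in the common window $[0,2B)$, a shared two-level carbon function that is cheap on $[0,B)$ and expensive on $[B,2B)$, and the same integrality/load argument forcing at least one unit of processing into the expensive interval on NO-instances. The one difference is the choice of the expensive cost. You fix $M=2B\cdot 2^n+1$ once, which beats every polynomial ratio for large $n$; the finitely many small $n$ can be handled by brute force, a detail you left implicit. The paper instead tailors $L=2BAn^b+1$ to the assumed ratio $An^b$, which keeps the carbon-cost ratio polynomial in $n$ and $B$.
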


In fact, the theorem holds even for independent jobs with a common release time and deadline, when both servers share the same strictly positive carbon-cost function consisting of two constant-cost intervals, and a feasible schedule is guaranteed to exist.

\begin{proof}

Suppose that a polynomial-time algorithm has approximation factor at most $A n^b$, for fixed integers $A\ge 1$ and $b\ge 0$. Such a bound covers every polynomially bounded approximation factor, including constant and logarithmic factors.

Reduce from \textsc{Partition}. Given positive integers $a_1,\ldots,a_n$ with $\sum_j a_j=2B$, create two servers and jobs
\[
  p_j=a_j,\qquad r_j=0,\qquad d_j=2B.
\]
Define $L=2BA n^b+1$ and give both servers the carbon function
\[
  c_1(t)=c_2(t)=
  \begin{cases}
    1, & 0\le t<B,\\
    L, & B\le t<2B.
  \end{cases}
\]

If the \textsc{Partition} instance is a YES-instance, assign a subset of total size $B$ to the first server and its complement to the second. Both servers finish by time $B$, at total cost $2B$. Since the total
processing requirement is $2B$ and all carbon rates are at least one, the optimum is exactly $2B$.

If the instance is a NO-instance, consider any feasible schedule. Each server's assigned load is an integer sum of whole job sizes. The loads sum to $2B$, and they cannot both equal $B$. One server therefore has load at least $B+1$. It can execute at most $B$ units before
time $B$, so at least one unit must be processed during the expensive interval. If $y$ is the total processing performed after time $B$, then $y\ge 1$ and
\[
  C(S)=(2B-y)+Ly=2B+(L-1)y\ge 2B+L-1>L.
\]
Preemption on a fixed server does not change this load argument.

On a YES-instance the approximation algorithm returns cost at most $2BA n^b=L-1$, whereas on a NO-instance every feasible schedule costs more than $L$. Comparing the returned cost with $L$ decides \textsc{Partition} in polynomial time, a contradiction.
\end{proof}

\section{Experiments}
\label{sec:exp}
% experimental setup 
%     implementations
%     synthetic workload generation
%     metrics
%     hardware
% comparison of exact solvers
%     maxflow-pass vs. mincost flow
%     residual reuse ablation
%     flow distribution across passes
%     rearrangment across passes
% scability
%     scaling with number of jobs
%     scaling with flow-network size
%     effect of interval density
%     CPU vs. GPU implementation
% evaluation on alibaba cluster traces
% cloud outsourcing reduction and lawler's algorithm

We experimentally evaluate the algorithms introduced in previous sections. Our experiments have four goals. First, we compare the running time of the max-flow based algorithm of Section~\ref{sec:exact} with a direct minimum-cost formulation. Second, we study how the running time changes as the size and structure of the scheduling instance and resulting flow network change. Third, we evaluate the algorithms on the production-derived workloads from the Alibaba cluster trace. Finally, we evaluate the reduction of Section~\ref{sec:cloud} and the running time of Lawler's exact dynamic program.

We use synthetic instances for the first two sets of experiments. This allows us to control system utilization, carbon composition, and flow-network size independently. The Alibaba experiments complement these results with workloads derived from production traces. 
%Our source code and experimental artifacts are publicly available~\cite{githubpage}.

\subsection{Experimental Setup}
\label{sec:exp-setup}

\paragraph{Implementations.} 
We compare two exact implementations of the edge-only problem. The first, denoted \textsc{MaxFlow-Passes}, is the algorithm described in Section~\ref{sec:exact}. It orders the carbon intervals by cost, progressively enables their capacity. The current flow and residual network are retained between successive phases. The CPU implementation uses Dinic's algorithm to compute the required max flows.

The second implementation, denoted \textsc{MaxFlow-Restart}, is an ablation of \textsc{MaxFlow-Passes}. It uses the same sequence of carbon levels, but discards the residual graph after each phase. The network is rebuilt before solving the next prefix problem. The cheaper-capacity locks obtained in earlier phases are retained. Thus, the two implementations differ in whether the previous residual computation is reused.

%cheaper-capacity locks obtained in earlier phases are retained
% , so the
% ablation follows the same cheapest-first objective while intentionally
% discarding the computational state that \textsc{MaxFlow-Passes} reuses.
% For every generated instance, we verify that
% \textsc{MaxFlow-Restart} and \textsc{MaxFlow-Passes} return the same
% normalized carbon cost.

The third implementation, denoted \textsc{MinCostFlow}, constructs the
same job--interval network. It assigns cost $c_k$ to the capacity of interval
$I_k$, and directly computes a minimum-cost flow of value
$
    P=\sum_{j\in J}p_j.
$
All three methods compute the same optimal objective. Their objective values agree on every tested instance. The experiments focus on their computational behavior. \footnote{All CPU experiments were executed on a KVM virtual machine with 24 x86-64 virtual CPUs based on an Intel Xeon Cascade Lake processor
and 94\,GB of memory.
Unless otherwise stated, we use random seed $42$. End-to-end algorithm time includes interval splitting, graph construction, flow computation, recovery of the flow assignment, and construction of the final schedule. Instance generation and file I/O are excluded.}

\paragraph{Synthetic jobs.}
For the utilization and green-share experiments, each instance contains between $35$ and $50$ jobs. After choosing the number of jobs $n$, we set the horizon to $H=6n$. Each processing requirement is an integer in ${1,\ldots,6}$.

For a target-utilization range $[\rho_{\min},\rho_{\max}]$, the generator selects a total processing requirement within
$
[\rho_{\min}H,\rho_{\max}H]
$
and distributes these processing units among the jobs. The realized utilization is
$
\rho=\frac{\sum_{j\in J}p_j}{H}.
$
The choice $H=6n$ ensures that the highest utilization range remains attainable under the bound $p_j\leq 6$.

Feasibility is guaranteed by construction. We first create a non-overlapping witness schedule containing all jobs. For job $j$ with the witness piece $[s_j,e_j)$, its release time is sampled from
$[0,s_j]$ and its deadline from $[e_j,H]$.  The witness schedule is not
provided to the algorithms. They receive only the resulting
$(r_j,d_j,p_j)$ values.  Jobs are shuffled before the instance is passed
to the solvers.

The carbon timeline is partitioned into between $5$ and $12$ piecewise-constant intervals unless otherwise specified. We use three normalized carbon costs,
$
c_G=0,\qquad c_B=1,\qquad c_R=2.
$
For the two-dimensional parameter sweep, the green share ranges from $0.20$--$0.25$ to $0.80$--$0.85$ in increments of $0.05$. The non-green portion is randomly divided between brown and red capacity. Target utilization ranges from $0.30$--$0.35$ to $0.95$--$1.00$, also in increments of $0.05$. This gives $14\times13=182$ utilization/green-share configurations. We generate $30$ independent instances for each configuration and repeat each timing measurement $20$ times.

\paragraph{Metrics.}
Our primary computational metric is wall-clock running time.  We report
\[
    \mathrm{Ratio}_{\mathrm{MCF}}
    =
    \frac{T_{\mathrm{MinCostFlow}}}
         {T_{\mathrm{MaxFlow-Passes}}},
\]
which measures the running-time difference between the two exact formulations.  We additionally report
\[
    \mathrm{Ratio}_{\mathrm{Restart}}
    =
    \frac{T_{\mathrm{MaxFlow-Restart}}}
         {T_{\mathrm{MaxFlow-Passes}}},
\]
which isolates the effect of retaining the residual graph between phases.  Values larger than one favor
\textsc{MaxFlow-Passes}.

To characterize where the proposed algorithm performs its work, let
$F_G$, $F_B$, and $F_R$ denote the cumulative flow after the green,
brown, and red phases, respectively.  For every feasible instance,
$F_R=P$.  We report the fraction of total processing introduced in each
phase:
\[
    \eta_G=\frac{F_G}{P},\qquad
    \eta_B=\frac{F_B-F_G}{P},\qquad
    \eta_R=\frac{F_R-F_B}{P}.
\]
Thus $\eta_G+\eta_B+\eta_R=1$. 

Finally, we measure how much of an earlier assignment is rearranged after
a more expensive carbon level is enabled.  Let
$x^{(h)}_{jk}$ denote the amount of job $j$ assigned to atomic interval
$I_k$ after phase $h$.  For the transition from phase $h-1$ to phase $h$,
we define the rerouted volume over the previously enabled intervals as
\[
    R_h =
    \frac{1}{2}
    \sum_{(j,k)\in \mathcal{E}_{h-1}}
    \left|
       x^{(h)}_{jk}-x^{(h-1)}_{jk}
    \right|,
\]
where $\mathcal{E}_{h-1}$ contains job--interval pairs belonging to
carbon levels enabled before phase $h$. We report $R_h$ normalized by the amount
of flow present before the transition.

% For the experiments with scheduling heuristics, we report carbon cost
% relative to the optimum:
% \[
%     \mathrm{Ratio}_{\mathrm{cost}}(A)
%     =
%     \frac{\mathrm{cost}(A)}
%          {\mathrm{OPT}_{\mathrm{edg}}},
% \]
% where $\mathrm{OPT}_{\mathrm{edg}}$ is computed by
% \textsc{MaxFlow-Passes}.

\subsection{Comparison of Exact Solvers}
\label{sec:exp-mcf}

\paragraph{Running time.}
\begin{figure}[t!]
    \centering
    \includegraphics[width=0.75\columnwidth]
        {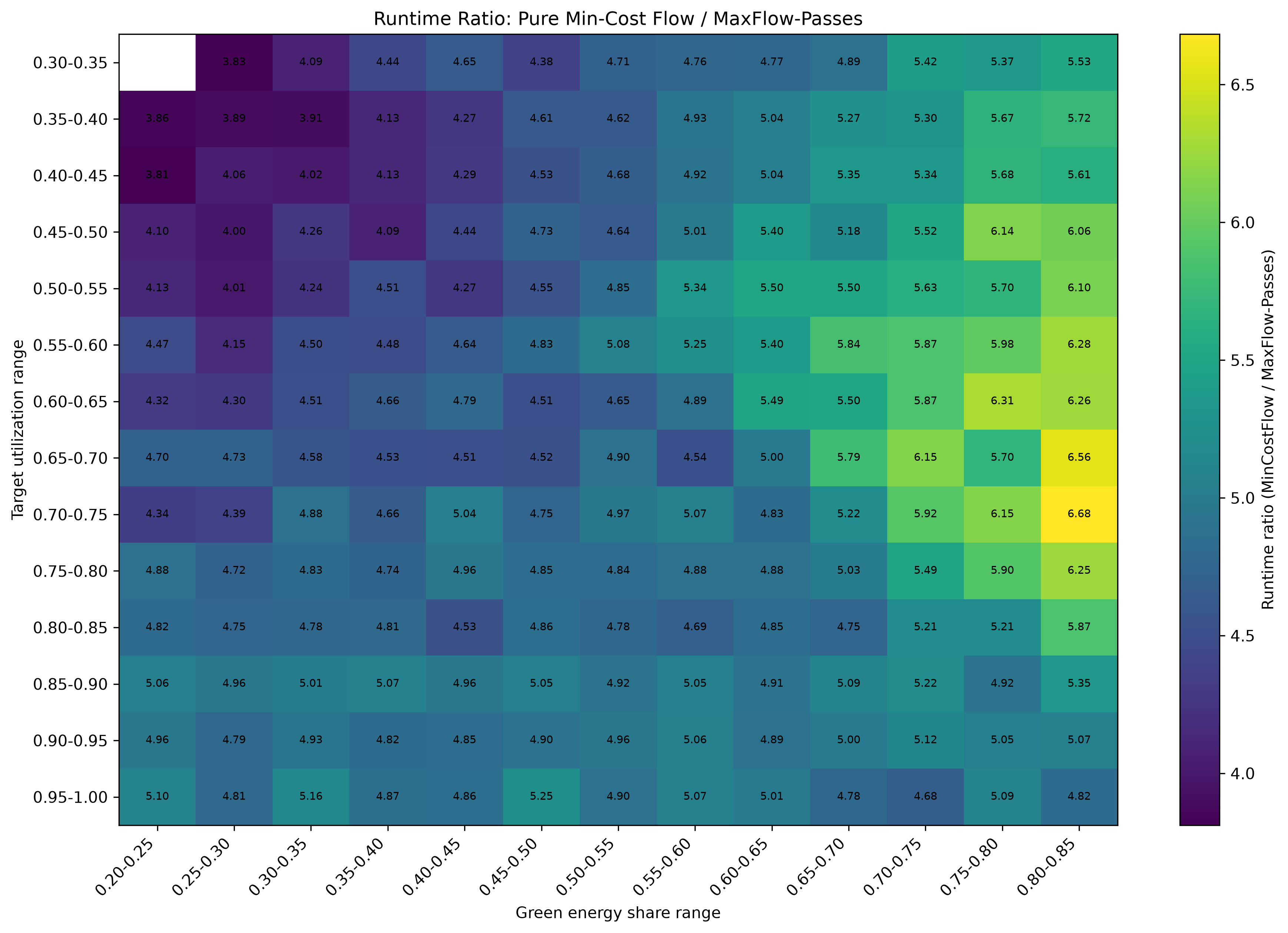}
    \caption{Runtime ratio
    $T_{\mathrm{MinCostFlow}}/T_{\mathrm{MaxFlow-Passes}}$
    as target utilization and green-energy share vary.  Each cell is
    averaged over all instances in the corresponding parameter
    configuration.  Values larger than one favor
    \textsc{MaxFlow-Passes}.}
    \label{fig:runtime-parameter-sweep}
\end{figure}

We first compare \textsc{MaxFlow-Passes} with the direct \textsc{MinCostFlow} formulation. We vary target utilization and green-energy share, since these parameters determine both the total demand placed on the machine and the amount of processing that can be assigned during the earlier, lower-carbon phases.

Figure~\ref{fig:runtime-parameter-sweep} reports
\[
    \frac{T_{\mathrm{MinCostFlow}}}
         {T_{\mathrm{MaxFlow-Passes}}}
\]
for each configuration. Across all instances, \textsc{MinCostFlow} requires $4.961\times$ the runtime of \textsc{MaxFlow-Passes} on average. The median ratio is $4.880\times$. At the configuration level, the average ratio ranges from approximately $3.81\times$ to $6.68\times$.

The difference also varies with the amount of green capacity. At low and moderate utilization, the advantage of \textsc{MaxFlow-Passes} generally increases with the green share. The ratio exceeds $6\times$ in several configurations. At high utilization, the difference is more stable because most available capacity must eventually be used. \textsc{MaxFlow-Passes} remains faster throughout the tested parameter space.

\subsection{Benefit of Residual Reuse}
\label{sec:exp-pass-machanism}

\begin{figure}[t]
    \centering
    \includegraphics[width=0.75\columnwidth]
        {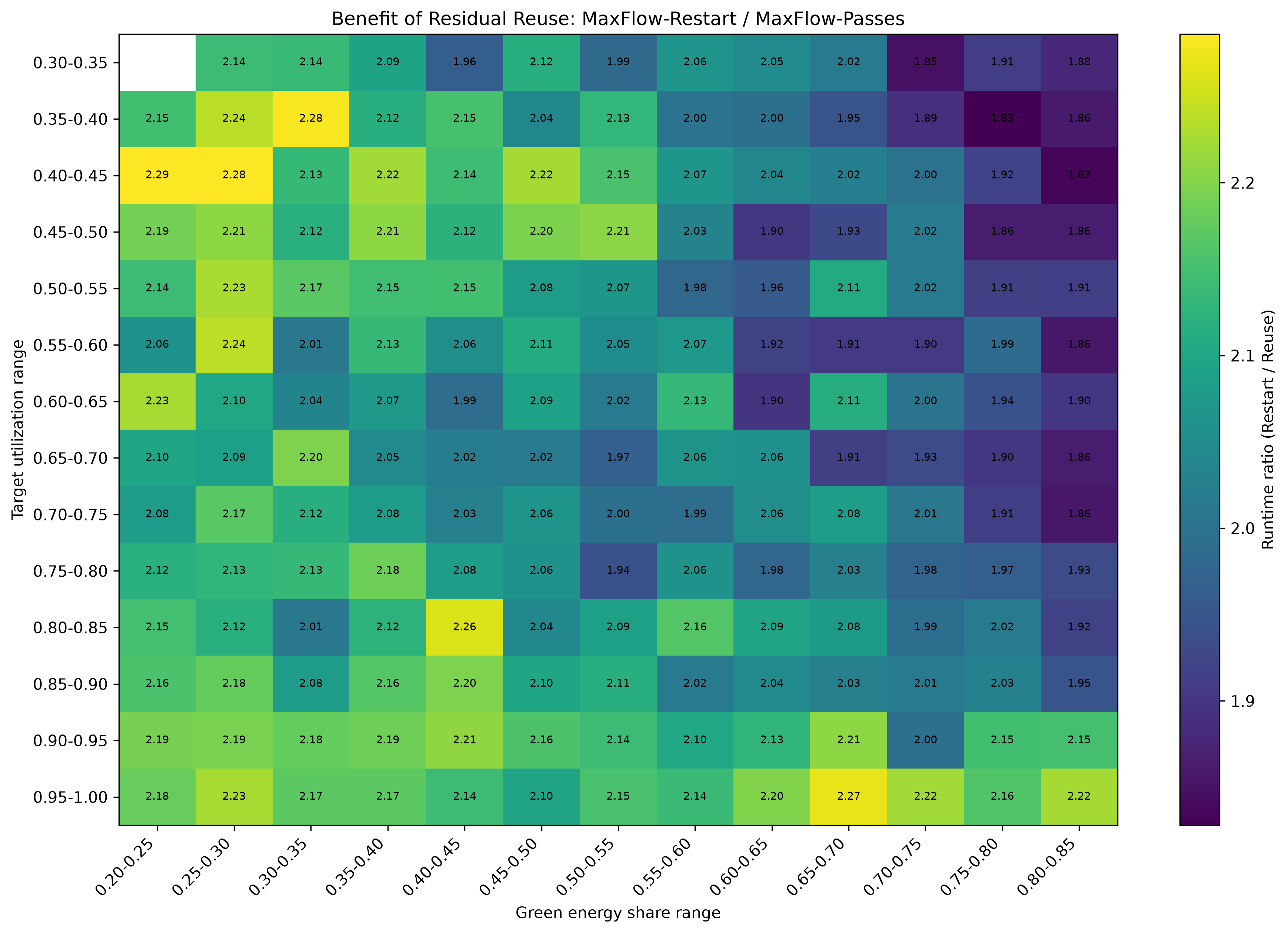}
    \caption{Benefit of retaining the residual network.  Each cell shows
    $T_{\mathrm{MaxFlow-Restart}}/
     T_{\mathrm{MaxFlow-Passes}}$
    averaged over the corresponding utilization/green-share
    configuration.}
    \label{fig:residual-reuse}
\end{figure}

We next isolate the effect of preserving the residual network between carbon-cost phases. \textsc{MaxFlow-Passes} retains the flow after the green phase, enables brown capacity, and continues augmenting the same residual network. The same procedure is repeated when red capacity is enabled.

\textsc{MaxFlow-Restart} follows the same cheapest-first sequence, but rebuilds the network whenever a new carbon level is enabled. Figure~\ref{fig:residual-reuse} reports 
\[\frac{T_{\mathrm{MaxFlow-Restart}}}{T_{\mathrm{MaxFlow-Passes}}}.\]
Across all individual instances, restarting requires $2.069\times$ the
runtime of the incremental implementation on average, with a median
ratio of $2.009\times$.  The average ratio within individual
utilization/green-share configurations ranges from approximately
$1.83\times$ to $2.29\times$. The advantage is therefore consistent across the entire tested parameter space.

These results show that residual reuse is an important component of the algorithm's computational efficiency. Rather than repeatedly solving the green, green--brown, and green--brown--red prefix problems from scratch, \textsc{MaxFlow-Passes} directly extends the solution computed during the preceding phase.

\paragraph{Behavior across phases.}
% \label{sec:exp-flow-phases}

\begin{figure}[t]
    \centering
    \includegraphics[width=0.75\columnwidth]
        {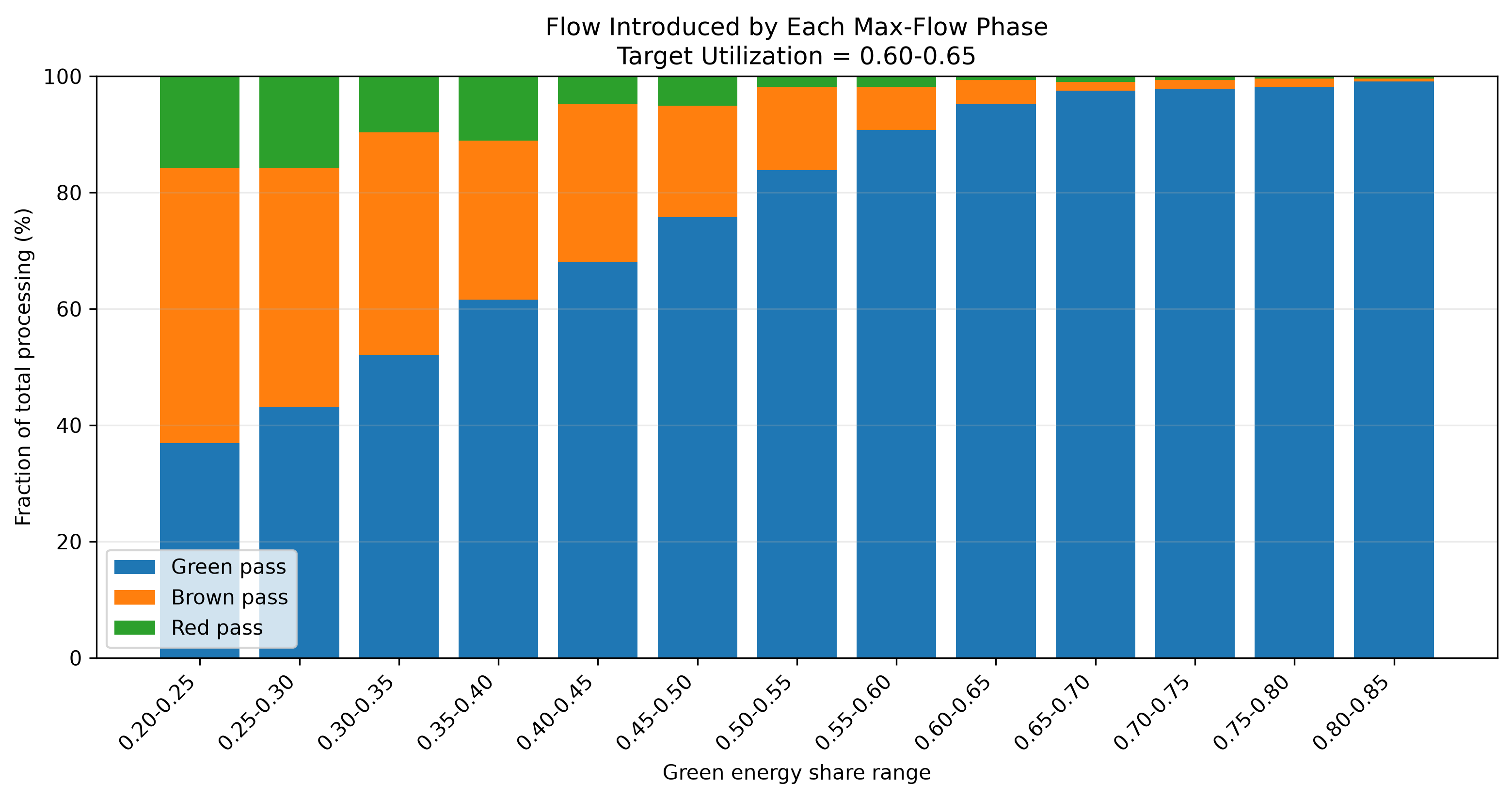}
    \caption{Fraction of total processing introduced by the green, brown,
    and red phases at target utilization $0.60$--$0.65$.  Increasing the
    green-energy share shifts progressively more of the final flow into
    the first phase.}
    \label{fig:flow-by-pass}
\end{figure}

Across all instances, the green phase contributes $73.85\%$ of the final flow on average. Brown capacity contributes another $18.14\%$, while red capacity contributes $8.01\%$. The corresponding median values are $79.10\%$, $11.84\%$, and $0\%$. Thus, in at least half of the tested instances, no additional processing is introduced during the red phase.

Figure~\ref{fig:flow-by-pass} shows this behavior for target utilization
$0.60$--$0.65$. At low green shares, a substantial fraction of the processing requirement is supplied by the brown and red phases. As the green share increases, more processing is established during the first phase. At the largest green shares, almost the entire final flow is present after the green phase.

This behavior helps explain the effectiveness of incremental
augmentation.  In many instances, the first phase establishes most of the
final solution, and later phases need only supply the remaining demand
rather than recomputing the entire allocation.

\begin{table*}
\caption{Aggregate behavior of the exact solvers and incremental max-flow phases.}
\centering
\label{tab:pass-mechanism-summary}
\begin{tabular}{lrrrrr}
\toprule
Metric & Mean & Median & Std. & Min & Max \\
\midrule
MinCost / MaxFlow-Passes runtime & 4.961 & 4.880 & 0.979 & 2.314 & 8.953 \\
Restart / MaxFlow-Passes runtime & 2.069 & 2.009 & 0.329 & 1.395 & 2.830 \\
\midrule
Flow added in green pass (\%) & 73.847 & 79.104 & 23.000 & 20.896 & 100.000 \\
Flow added in brown pass (\%) & 18.140 & 11.842 & 18.491 & 0.000 & 78.358 \\
Flow added in red pass (\%) & 8.013 & 0.000 & 13.722 & 0.000 & 76.226 \\
\midrule
Green allocation rerouted in Pass 2 (\%) & 3.626 & 0.000 & 6.930 & 0.000 & 65.079 \\
Prior allocation rerouted in Pass 3 (\%) & 1.595 & 0.000 & 4.075 & 0.000 & 36.441 \\
\bottomrule
\end{tabular}
\end{table*}

\subsection{Scalability}
\label{sec:exp-scalability}

We next study how \textsc{MaxFlow-Passes} scales as the scheduling instance and the corresponding flow network grow. Unlike the previous experiment, this study varies both the number of jobs and the number of carbon intervals over several orders of magnitude.

For each job count $n$, we consider five carbon-interval densities. The number of original intervals is approximately $0.10n$, $0.25n$, $0.50n$, $1.0n$, and $2.0n$. We refer to these settings as \emph{sparse}, \emph{light}, \emph{medium}, \emph{dense}, and \emph{very dense}. We generate three instances for each configuration and report the mean end-to-end running time.

The number of original carbon intervals does not directly determine the final network size.
Splitting the timeline at job release times and deadlines produces a larger set of atomic intervals. A job--interval edge is created whenever an atomic interval lies inside the feasible window of a job. We therefore also record
\[
    |E_{JI}|
    =
    \left|
    \left\{
        (j,k):
        I_k \subseteq [r_j,d_j)
    \right\}
    \right|.
\]

\paragraph{CPU scaling.}

% \begin{figure}[t]
%     \centering
%     \includegraphics[width=0.75\columnwidth]
%         {figures/existing_cpu_runtime_vs_jobs.pdf}
%     \caption{CPU running time of \textsc{MaxFlow-Passes} as the number
%     of jobs increases.  Each curve corresponds to a different number of
%     original carbon intervals relative to the job count.  Both axes are
%     logarithmic.}
%     \label{fig:cpu-scaling-jobs}
% \end{figure}

% \begin{figure}[t]
%     \centering
%     \includegraphics[width=0.75\columnwidth]
%         {figures/existing_cpu_runtime_vs_edges.pdf}
%     \caption{CPU running time of \textsc{MaxFlow-Passes} as a function
%     of the number of job--interval edges in the constructed flow network.
%     Each point is averaged over the available instances of the
%     corresponding configuration.  Both axes are logarithmic.}
%     \label{fig:cpu-scaling-edges}
% \end{figure}

\begin{figure}[t]
    \centering

    \begin{subfigure}[t]{0.48\columnwidth}
        \centering
        \includegraphics[width=\linewidth]
            {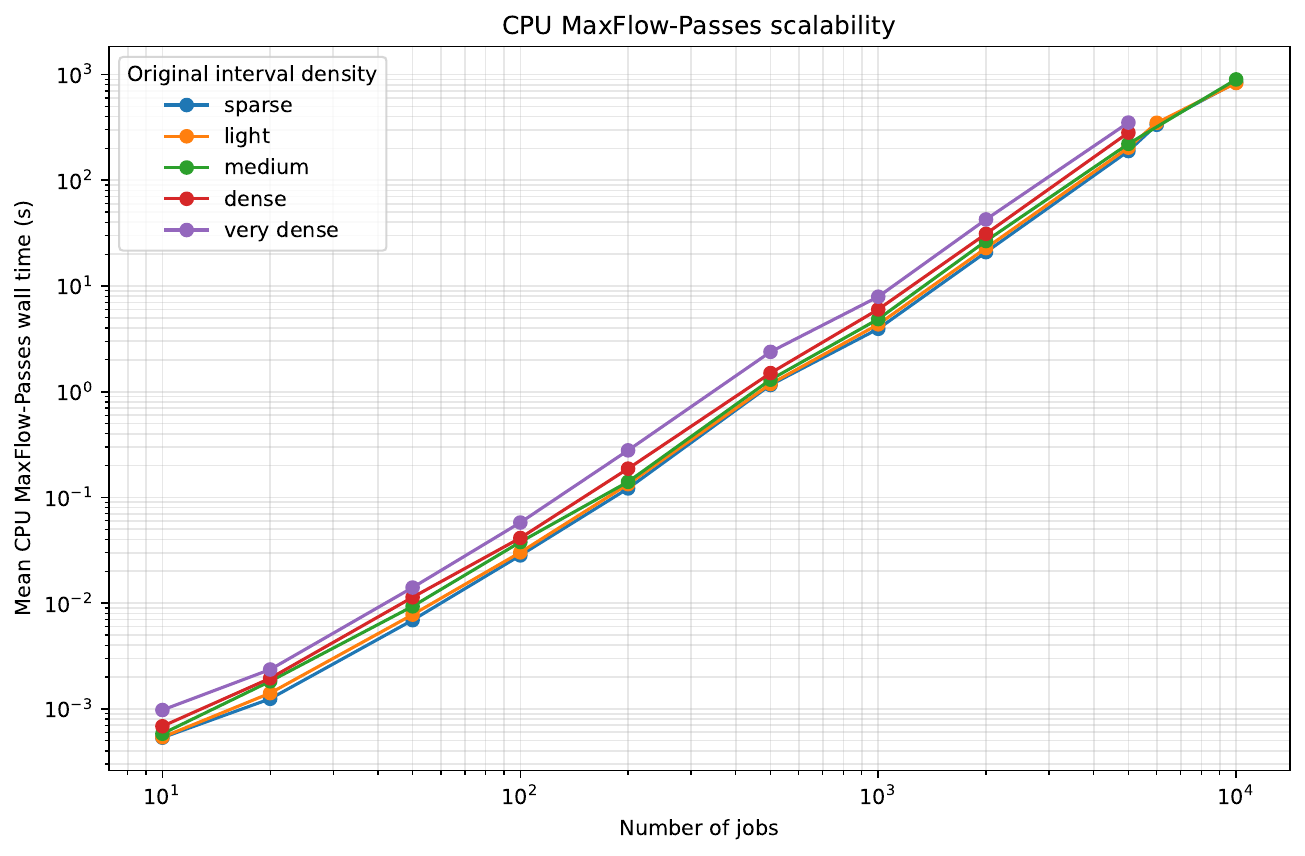}
        \caption{Number of jobs.}
        \label{fig:cpu-scaling-jobs}
    \end{subfigure}
    \hfill
    \begin{subfigure}[t]{0.48\columnwidth}
        \centering
        \includegraphics[width=\linewidth]
            {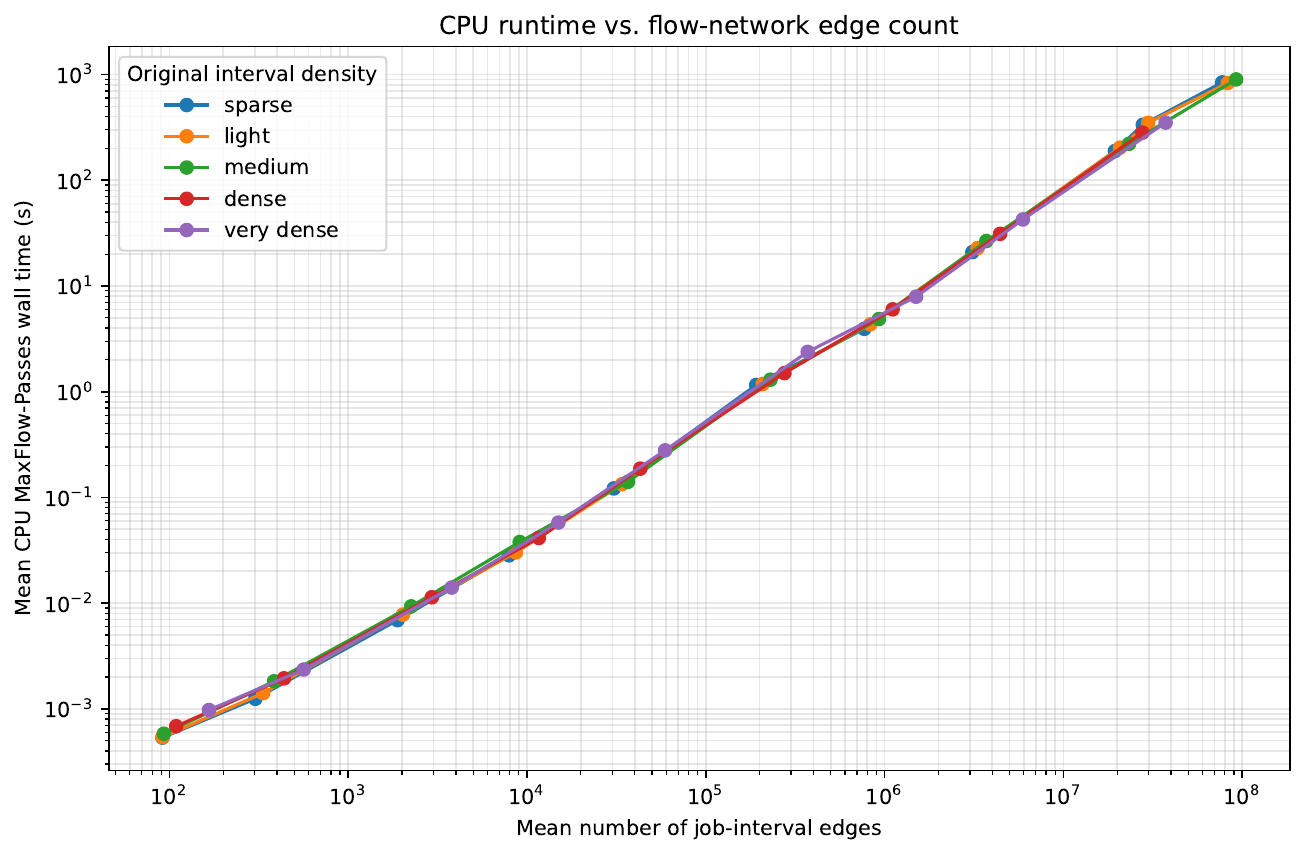}
       \caption{Job--interval edges.}
        \label{fig:cpu-scaling-edges}
    \end{subfigure}

    \caption{CPU running time of \textsc{MaxFlow-Passes}.
    (a) Runtime versus the number of jobs, with curves
    corresponding to different carbon-interval densities.
    (b) Runtime versus the number of job--interval edges
    in the constructed flow network, averaged over
    instances of each configuration.
    Both axes are logarithmic.}
    \label{fig:cpu-scaling}
\end{figure}

Figure~\ref{fig:cpu-scaling-jobs} shows that running time increases rapidly with the number of jobs.
For the sparse configuration, the mean running time increases from
$0.53$\,ms for $10$ jobs to $28.3$\,ms for $100$ jobs,
$3.93$\,s for $1000$ jobs, and $188.9$\,s for $5000$ jobs.
For the largest completed sparse configuration with $10{,}000$ jobs,
the mean running time is $843.0$\,s.

The flow network grows much faster than the job count alone. For the same sparse configurations, the mean number of job--interval
edges increases from $91$ at $10$ jobs to approximately
$7.9\times 10^3$ at $100$ jobs,
$7.7\times 10^5$ at $1000$ jobs, and
$1.94\times 10^7$ at $5000$ jobs.
At $10{,}000$ jobs, the network contains approximately
$7.74\times 10^7$ job--interval edges.

Interval density produces a similar effect. With $1000$ jobs, the mean number of job--interval edges grows from approximately $0.77$ million in the sparse setting to $1.50$ million in the very-dense setting. The corresponding mean running time increases from $3.93$,s to $7.91$,s. At $5000$ jobs, the network grows from approximately $19.4$ million edges in the sparse setting to $37.3$ million in the very-dense setting. Running time increases from $188.9$\,s to $351.2$\,s.

These results show that $|E_{JI}|$ is a useful measure of computational scale. Increasing the number of jobs also increases the number of atomic intervals and feasible job--interval pairs. Instances with the same job count can therefore produce substantially different network sizes and running times.

The implementation handles networks containing tens of millions of job--interval edges. For the largest instances, the running time reaches several minutes.

\paragraph{GPU acceleration.}
% \label{sec:exp-gpu}

We also evaluate whether \textsc{MaxFlow-Passes} benefits from GPU
acceleration. We implement \textsc{MaxFlow-Passes} using ECL-MaxFlow~\cite{ECLMaxFlow}.  The GPU version
preserves the flow between phases, but reconstructs the residual graph
and invokes the GPU solver separately for each pass.  
We report both GPU algorithm time and complete GPU wall-clock time. GPU algorithm time excludes ECL interface overhead.

\begin{figure}[t]
    \centering
    \includegraphics[width=0.98\columnwidth]
        {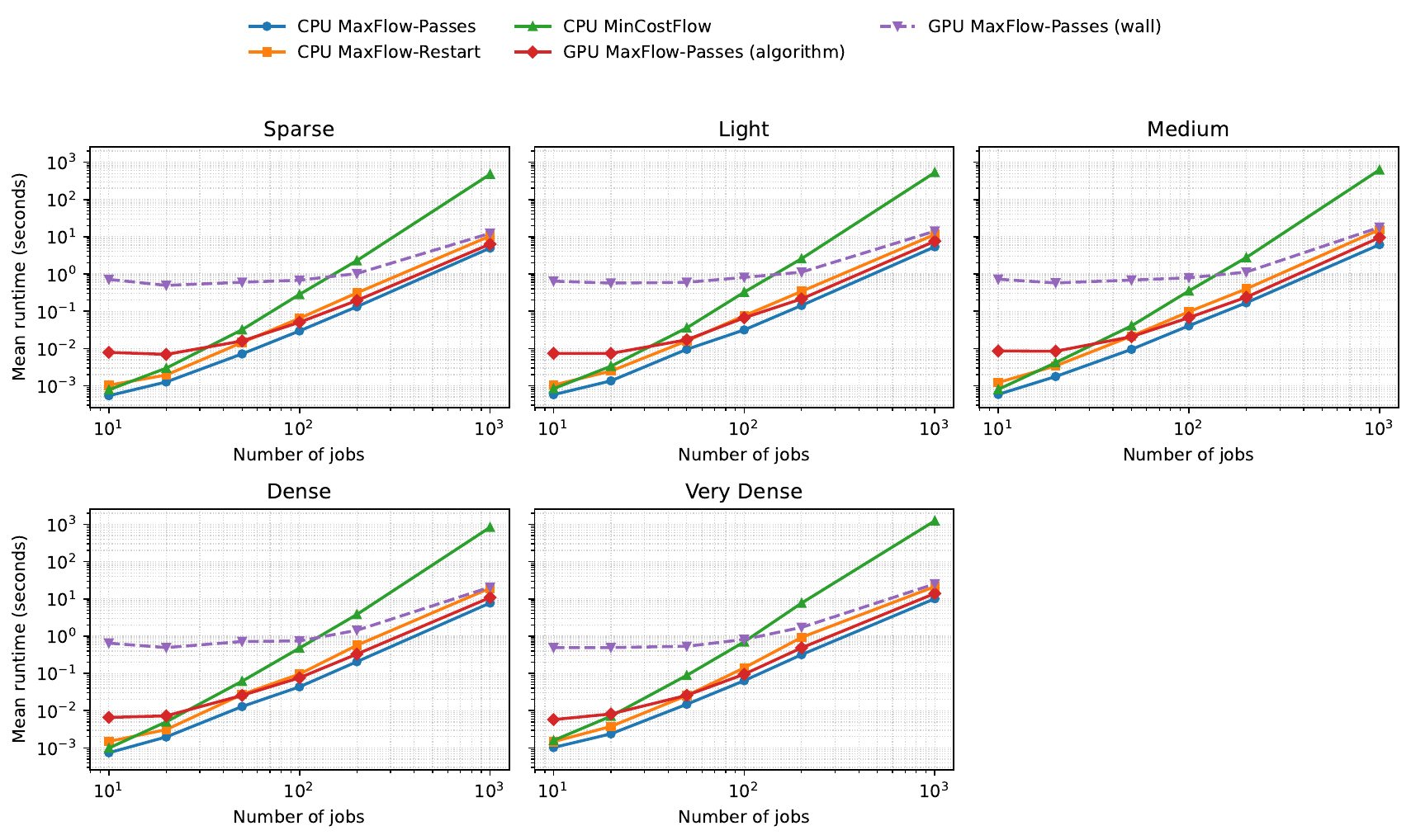}
    \caption{Running time of the CPU exact solvers and GPU
    \textsc{MaxFlow-Passes}. Both axes are logarithmic.}
    \label{fig:gpu-cpu-scaling}
\end{figure}

Figure~\ref{fig:gpu-cpu-scaling} shows that the GPU implementation does
not outperform CPU \textsc{MaxFlow-Passes} in the tested configurations.
For example, at $1000$ sparse jobs, CPU \textsc{MaxFlow-Passes} requires
$4.96$\,s, compared with $6.34$\,s of GPU algorithm time and
$12.35$\,s of GPU wall time.  Across the $1000$-job configurations, GPU
algorithm time is about $1.28\times$--$1.53\times$ the CPU
\textsc{MaxFlow-Passes} time.  However, it remains much faster than
\textsc{MinCostFlow}, whose runtime reaches hundreds of seconds at this
scale.

The current GPU integration therefore does not provide a speedup over the incremental CPU implementation. Residual construction, flow recovery, and repeated GPU invocation remain significant costs.

\subsection{Evaluation on Alibaba Cluster Traces}
\label{sec:exp-alibaba}

We next evaluate \textsc{MaxFlow-Passes} and \textsc{MinCostFlow} on
production-derived workloads from the Alibaba cluster trace.  Each selected Alibaba job is converted into one scheduling instance. Its tasks form the precedence DAG.

For each task $j$: release time $r_j=\min_k s_{jk}$, deadline $d_j=\max_k e_{jk}$, and weight $w_j=\sum_k(e_{jk}-s_{jk})c_{jk}$ are set, where $s_{jk}$, $e_{jk}$, and $c_{jk}$ are the start time, end time, and
average CPU utilization of instance $k$.  Processing times are scaled as
$p_j=\alpha w_j$, where $\alpha$ is chosen so that
$\sum_j p_j=0.7H,$ and $H$ is the observed job horizon.
Release times and
deadlines are then tightened according to the DAG precedence constraints.

The horizon is divided into six equal carbon intervals following
green--brown--red--green--brown--red, with costs $0$, $1$, and $2$.

We select valid trace jobs over increasing DAG sizes and report their actual
number of tasks.  Each solver is run five times and we report median
end-to-end runtime.  We additionally record the number of job--interval
edges $|E_{JI}|$, since this more directly characterizes the size of the
constructed flow network.

\begin{figure}[t]
    \centering
    \includegraphics[width=0.75\columnwidth]
        {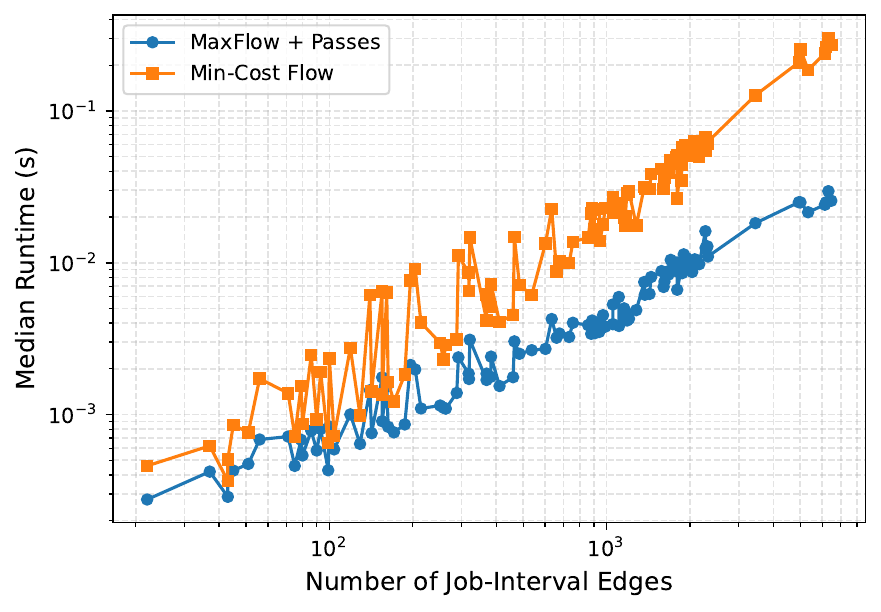}
    \caption{Running time of the exact solvers on Alibaba trace-derived
    workloads as the constructed scheduling instances increase in size.}
    \label{fig:alibaba-runtime}
\end{figure}

Figure~\ref{fig:alibaba-runtime} shows that
\textsc{MaxFlow-Passes} remains consistently faster than
\textsc{MinCostFlow} on the trace-derived workloads.  The two algorithms
agree on feasibility and optimal carbon cost for every tested instance.
The runtime gap generally increases for larger flow networks, reaching
approximately $10\times$ on some of the largest networks. 

The advantage observed on synthetic instances therefore also appears on the production-derived workloads.

\subsection{Cloud-Outsourcing Reduction and Lawler's Algorithm}
\label{sec:exp-lawler}

% We evaluate both the correctness and scalability of the reduction in
% Section~\ref{sec:cloud}. For validation, we generate 30 instances for
% each $n\in\{5,8,10\}$ and assign outsourcing penalties uniformly from
% $\{1,\ldots,10\}$. We compute the cloud optimum independently by
% enumerating all local/cloud assignments and solving each local
% scheduling problem using \textsc{MinCostFlow}. We compare this value
% with the penalty optimum obtained from Lawler's exact dynamic program.
% The two objective values match on all 90 instances.

We next evaluate the scalability of Lawler's exact algorithm and the
resulting edge/cloud allocation. We vary the number of original jobs
over $n\in\{5,10,15,20,25,30,35,40\}$ and the number of carbon
intervals over approximately $0.10n$, $0.25n$, $0.50n$, and $1.0n$,
denoted \emph{sparse}, \emph{light}, \emph{medium}, and \emph{dense},
respectively. Outsourcing penalties are sampled
uniformly from $\{1,\ldots,\kappa_{\max}\}$, where
$\kappa_{\max}\in\{5,10,20,40,80\}$. We generate five instances per
configuration, yielding 800 instances in total, using random seed
$42$. Each instance has a 1000-second time limit.

For each reduced instance, we measure the running time, total weight
$W$, and number of original jobs assigned to the cloud in an optimal
solution. Figure~\ref{fig:lawler-results} illustrates the computational
behavior of Lawler's algorithm. Figure~\ref{fig:lawler-scaling} shows
that runtime increases rapidly with the number of original
jobs, with denser carbon intervals generally requiring more
computation. This is partly due to the additional dummy jobs
introduced by the reduction. Figure~\ref{fig:lawler-weight-scaling} examines
the effect of penalty weights by scaling all weights by a
common factor $\alpha$, while keeping the scheduling
constraints unchanged. Increasing $\alpha$ increases the
total weight $W$ without changing the optimal set of
accepted jobs, allowing us to isolate the effect of weight
magnitude on the dynamic program's runtime.

Across the evaluated configurations, the percentage of original
jobs outsourced to the cloud ranges from approximately $2\%$
to $27\%$. This shows that cloud outsourcing plays a
nontrivial role in the optimal solutions, with the allocation
between edge and cloud varying across different scheduling
configurations and outsourcing penalties.

\begin{figure}[t]
    \centering

    \begin{subfigure}[t]{0.48\columnwidth}
        \centering
        \includegraphics[width=\linewidth]
            {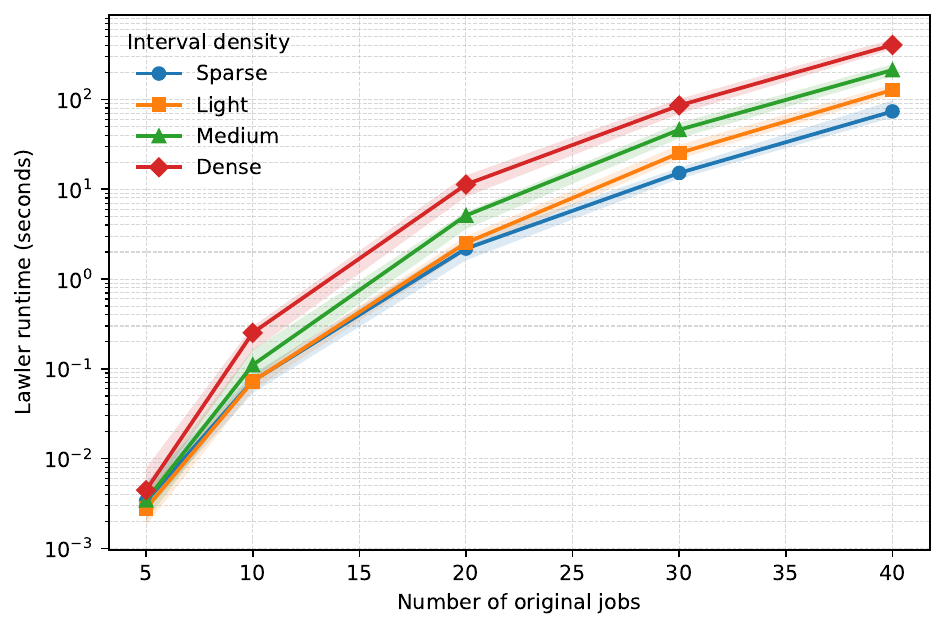}
        \caption{Number of jobs.}
        \label{fig:lawler-scaling}
    \end{subfigure}
    \hfill
    \begin{subfigure}[t]{0.48\columnwidth}
        \centering
        \includegraphics[width=\linewidth]
            {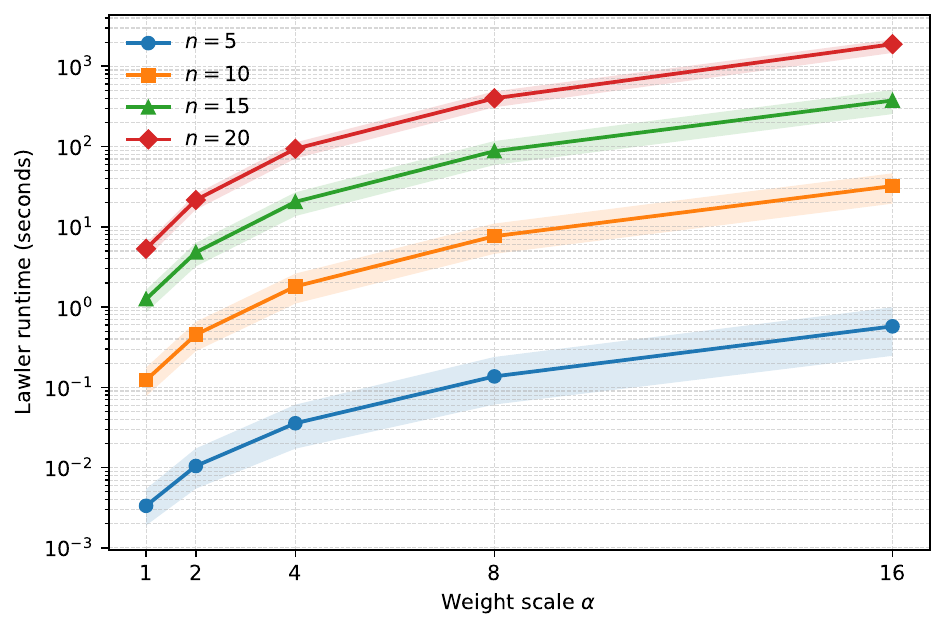}
        \caption{Total weight.}
        \label{fig:lawler-weight-scaling}
    \end{subfigure}

    \caption{Running time of Lawler's exact dynamic program.
    Shaded regions show the interquartile range; the $y$-axis is
    logarithmic.}
    \label{fig:lawler-results}
\end{figure}

\section{Conclusions}

In this paper, we developed optimal strategies for pre-emptive scheduling of jobs on a single edge server with time varying costs, under general precedence constraint requirements, with release times and deadlines. Moreover, we showed that network flow methods are a faster and easier way to find optimal solutions instead of using the more general min-cost flow based methods. In addition, we developed algorithms based on pioneering work by Lawler \cite{Lawler} for cloud out-sourcing - this connects the edge vs cloud question for the first time with classical work on weighted tardiness scheduling. However, the algorithm is not yet scalable. Meanwhile, what would be interesting to study is fast approximation algorithms for minimizing weighted tardiness, a problem now of central interest in the context of making decisions as to what to outsource to the more expensive cloud computing platform from the edge resource.

\bibliographystyle{abbrv}
\bibliography{refs}

\end{document}